\DeclareMathAlphabet{\mathpzc}{OT1}{pzc}{m}{it}
\DeclareMathOperator*{\argmin}{argmin}  % in your preamble 
\newcommand\thickbar[1]{\accentset{\rule{.4em}{.8pt}}{#1}}
\newcommand{\mato}[2]{\left[\begin{array}{#1} #2 \end{array}\right]}
\newtheorem{thm}{Theorem}{}
\newtheorem{lem}{Lemma}{}
{}
\newtheorem{rem}{Remark}{}
\newcommand{\TODO}[1]{
	\newsavebox\todoboxa
	\savebox\todoboxa{\begin{ocg}[printocg=never]{ToDo screen}{todo4screen}{1}
			\todo[inline]{#1}
	\end{ocg}}
	\newsavebox\todoboxb
	\savebox\todoboxb{\begin{ocg}[printocg=always]{ToDo print}{todo4print}{0}
			\todo[inline, color=white]{#1}
	\end{ocg}}
	\noindent\usebox\todoboxb\llap{\usebox\todoboxa}
	\global\let\todoboxa\relax
	\global\let\todoboxb\relax
}
\theoremstyle{plain}
\newtheorem{assumption}{Assumption}
\def\user@resume{resume}
\def\user@intermezzo{intermezzo}
\newcounter{previousequation}
\newcounter{lastsubequation}
\newcounter{savedparentequation}
\renewenvironment{subequations}[1][]{%
	\def\user@decides{#1}%
	\setcounter{previousequation}{\value{equation}}%
	\ifx\user@decides\user@resume 
	\setcounter{equation}{\value{savedparentequation}}%
	\else  
	\ifx\user@decides\user@intermezzo
	\refstepcounter{equation}%
	\else
	\setcounter{lastsubequation}{0}%
	\refstepcounter{equation}%
	\fi\fi
	\protected@edef\theHparentequation{%
		\@ifundefined {theHequation}\theequation \theHequation}%
	\protected@edef\theparentequation{\theequation}%
	\setcounter{parentequation}{\value{equation}}%
	\ifx\user@decides\user@resume 
	\setcounter{equation}{\value{lastsubequation}}%
	\else
	\setcounter{equation}{0}%
	\fi
	\def\theequation  {\theparentequation  \alph{equation}}%
	\def\theHequation {\theHparentequation \alph{equation}}%
	\ignorespaces
}{%
	%  \arabic{equation};\arabic{savedparentequation};\arabic{lastsubequation}
	\ifx\user@decides\user@resume
	\setcounter{lastsubequation}{\value{equation}}%
	\setcounter{equation}{\value{previousequation}}%
	\else
	\ifx\user@decides\user@intermezzo
	\setcounter{equation}{\value{parentequation}}%
	\else
	\setcounter{lastsubequation}{\value{equation}}%
	\setcounter{savedparentequation}{\value{parentequation}}%
	\setcounter{equation}{\value{parentequation}}%
	\fi\fi
	%  \arabic{equation};\arabic{savedparentequation};\arabic{lastsubequation}
	\ignorespacesafterend
}
\begin{document}
%
% paper title
% Titles are generally capitalized except for words such as a, an, and, as,
% at, but, by, for, in, nor, of, on, or, the, to and up, which are usually
% not capitalized unless they are the first or last word of the title.
% Linebreaks \\ can be used within to get better formatting as desired.
% Do not put math or special symbols in the title.
\title{Distributed Dual Gradient Tracking for Priority-Considered Load Shedding}
%
%
% author names and IEEE memberships
% note positions of commas and nonbreaking spaces ( ~ ) LaTeX will not break
% a structure at a ~ so this keeps an author's name from being broken across
% two lines.
% use \thanks{} to gain access to the first footnote area
% a separate \thanks must be used for each paragraph as LaTeX2e's \thanks
% was not built to handle multiple paragraphs
%

\author{Ismi Rosyiana Fitri, Jung-Su Kim$ $% <-this % stops a space
\thanks{Ismi Rosyiana Fitri and Jung-Su Kim are with the Department of Electrical and Information Engineering, Seoul National University of Science and Technology, Seoul 01811, Korea e-mail: jungsu@seoultech.ac.kr. This work has been submitted to the IEEE for possible publication. Copyright may be transferred without notice, after which this version may no longer be accessible.}}% <-this % stops a space
%\thanks{M. Shell was with the Department
%of Electrical and Computer Engineering, Georgia Institute of Technology, Atlanta,
%GA, 30332 USA e-mail: (see http://www.michaelshell.org/contact.html).}% <-this % stops a space
%\thanks{J. Doe and J. Doe are with Anonymous University.}% <-this % stops a space
%\thanks{Manuscript received ; revised August 26, 2015.}}

% note the % following the last \IEEEmembership and also \thanks - 
% these prevent an unwanted space from occurring between the last author name
% and the end of the author line. i.e., if you had this:
% 
% \author{....lastname \thanks{...} \thanks{...} }
%                     ^------------^------------^----Do not want these spaces!
%
% a space would be appended to the last name and could cause every name on that
% line to be shifted left slightly. This is one of those "LaTeX things". For
% instance, "\textbf{A} \textbf{B}" will typeset as "A B" not "AB". To get
% "AB" then you have to do: "\textbf{A}\textbf{B}"
% \thanks is no different in this regard, so shield the last } of each \thanks
% that ends a line with a % and do not let a space in before the next \thanks.
% Spaces after \IEEEmembership other than the last one are OK (and needed) as
% you are supposed to have spaces between the names. For what it is worth,
% this is a minor point as most people would not even notice if the said evil
% space somehow managed to creep in.

% The paper headers
\markboth{Journal of \LaTeX\ Class Files,~Vol.~14, No.~8, August~2015}%
{Shell \MakeLowercase{\textit{et al.}}: Bare Demo of IEEEtran.cls for IEEE Journals}
% The only time the second header will appear is for the odd numbered pages
% after the title page when using the twoside option.
% 
% *** Note that you probably will NOT want to include the author's ***
% *** name in the headers of peer review papers.                   ***
% You can use \ifCLASSOPTIONpeerreview for conditional compilation here if
% you desire.

% If you want to put a publisher's ID mark on the page you can do it like
% this:
%\IEEEpubid{0000--0000/00\$00.00~\copyright~2015 IEEE}
% Remember, if you use this you must call \IEEEpubidadjcol in the second
% column for its text to clear the IEEEpubid mark.

% use for special paper notices
%\IEEEspecialpapernotice{(Invited Paper)}

% make the title area
\maketitle

% As a general rule, do not put math, special symbols or citations
% in the abstract or keywords.
	\begin{abstract}
	This paper studies two fundamental problems in power systems: the economic dispatch problem (EDP) and load shedding. For the EDP, an extension of the problem considering the transmission losses is presented. Because the optimization problem is non-convex owing to quadratic equality constraints modeling the transmission losses, a convex relaxation method is presented.  	Under a particular assumption, it is shown that a distributed algorithm can be designed to solve the considered EDP.
	Furthermore, this work aims to handle an overloading problem by employing an optimal load shedding method. Emphasis is placed on scheduling the load shedding when there exist some priorities on the loads.  First, a novel optimization problem is proposed to obtain the load shedding according to the predefined priority order. Then, a distributed algorithm solving the optimization problem is presented.
	Finally, this paper presents how to integrate the proposed algorithms, i.e., the EDP and the load shedding, in a distributed manner. Simulation results are presented to demonstrate the effectiveness of the proposed method.
\end{abstract}
%%%%%%%%%%%%%%%%%%%%%%%%%%%%%%%%%%%%%%%%%%%%%%%%%%%%%%%%%%%%%%%%%%%%%%%%%%%%%%%%%%%%%%%%%%%%%%%

%%%%%%%%%%%%%%%%%%%%%%%%%%%%%%%%%%%%%%%%%%%%%%%%%%%%%%%%%%%%%%%%%%%%%%%%%%%%%%%%%%%%%%%%%%%%%%%
\section{Introduction}\label{sec1}
%%%%%%%%%%%%%%%%%%%%%%%%%%%%%%%%%%%
\subsection{Background and motivation}
There has been a growing interest in developing distributed energy resource (DER) systems to reduce gas emissions and mitigate global warming.
Over the last few years, the economic dispatch problem (EDP) has been studied to utilize and manage DER efficiently. The problem is formulated to allocate power from the generators such that the aggregate generation cost is minimized while achieving supply-demand balance \cite{fan_zhang_1998}. 

Traditionally, the EDP is solved in a centralized manner using global system-wide information. However, such an approach suffers from single-point failure; thus, it is inherently not suitable for distributed energy systems \cite{Qiang2016}. To deal with this problem, distributed control approaches are widely studied to make up for the weak point of the centralized controller \cite{Yang2019}. 

Owing to the uncertainty and varying characteristics of DERs, the available power from the DERs and energy storage may not be sufficient to meet the load requirements. Thus, load shedding may be necessary to avoid a total collapse of the electrical supply \cite{Lin2019}. The improved communication technologies and sensing capabilities of a distributed energy system give the operator a chance to handle this problem through demand-side management. For example, under a demand response (DR) program, the utilities may request that energy users curtail a portion or all of their load power demand in exchange for some incentives, discounts, or other concessions. In light of this, it is appropriate to consider varying load shedding scenarios due to various customers' choices on the offered load shedding program in real-time operation.

This paper aims to develop a distributed energy management that is applicable for ensuring the distributed energy systems' reliability and efficiency. We are particularly interested in solving two fundamental problems in power systems: the EDP and load shedding. This work attempts to allocate the optimal dispatched power through the EDP when transmission loss is considered. In addition, it aims to design an algorithm for scheduling the load shedding when there are loads that agree to be shed first due to some prior contract. A distributed energy management is then formulated so that the proposed algorithm for the load shedding can be helpful for ensuring the sufficiency of the available power to supply the remaining demand, meaning that the feasibility of the EDP is guaranteed.
%%%%%%%%%%%%%%%%%%%%%%%%%%%%%%%%%%%

%%%%%%%%%%%%%%%%%%%%%%%%%%%%%%%%%%%
\subsection{Literature review and problem descriptions}
\subsubsection{Distributed EDP}
Recent developments in distributed algorithms for the EDP have heightened the need for distributed agents to agree on the incremental costs \cite{Zhang2012}. In \cite{Kar2012, Yang2017, Wang2019}, consensus-based algorithms are developed particularly for a quadratic function representing the generation cost. In order to consider general convex cost functions, the algorithms are designed using the dual problem of the EDP, e.g., in \cite{Doan2017, Yangvarying2017, Xu2019}. Furthermore, it is shown in \cite{Nedic2018ImprovedCR} that the convergence of the existing algorithms can be improved when the generation costs are also smooth functions. 

The works mentioned above do not consider the effects of transmission losses that exist in practical implementation. 
The studies in \cite{Binetti2014} and \cite{Kim2019} have given an account of transmission losses in the EDP. The approach is able to yield the supply-demand balance by calculating the power loss at every iteration of the EDP's algorithm and then adding it to the constant demand.
In \cite{Xing2015}, the loss model in a quadratic function is fully accounted for obtaining the optimal dispatched power. Because of the complicated quadratic function modeling the power losses, the study considers an algorithm which includes an inner loop of consensus steps per iteration. It is well known that an inner loop places a more considerable communication burden and needs extra coordination among agents to terminate the inner loop. In \cite{Zhao2017, Lee2019}, the loss model is simplified. Hence, the EDP can be solved in a distributed manner, and the algorithm does not require any inner loop steps. Given the existing works above, this paper is interested in studying an extension of the traditional EDP that considers the quadratic function loss model but still allows us to solve the problem without any need for a consensus inner loop.

\subsubsection{Priority-considered load shedding}
To date, several studies have examined a suitable scheme for scheduling the load shedding, e.g., \cite{Faranda2007, Hussain2020}. Heuristically, the priority-based load shedding can be obtained by gradually allocating the load shedding from the highest to the lowest-priority customers \cite{Kato2014, Hussain2018}.
In \cite{Grewal1998, Teshome2015}, the scheduled load shedding is determined by minimizing some convex functions representing the cost or the adversity of the load shedding. Larger weights are assigned to high-priority customers so that the resulting load shedding is in accordance with the pre-selected priority order. In response to the time-varying energy market and new forecast data, the desired priority list or the buses' maximum allowed load shed might change. As a result, the pre-selected weights may no longer work; thus, they must be re-tuned, which is non-trivial in real-time operation. This work aims to design a distributed algorithm for a priority-considered load shedding that minimizes the need for tuning parameters
	in order to handle the changes in the desired load shedding scenario.
%%%%%%%%%%%%%%%%%%%%%%%%%%%%%%%%%%%

%%%%%%%%%%%%%%%%%%%%%%%%%%%%%%%%%%%
\subsection{Problem description and statement of contribution}
The main contributions of this paper are as follows: 
\begin{itemize}
	\item In this paper, the EDP is formulated such that transmission losses are taken into account.
	The proposed EDP is a non-convex optimization problem because of a quadratic equality constraint denoting the supply-demand balance requirement under consideration of transmission losses. This paper shows that a convex relaxation method can be used to obtain the solution of the formulated non-convex optimization. 
	\item Under some assumptions, it is shown that the EDP can be solved in a distributed manner such that the method does not require any inner loop algorithms.
	\item This paper proposes a novel optimization for the priority-considered load shedding such that it does not require a re-tuning procedure to deal with possibly time-varying priority orders. A distributed algorithm to solve the proposed optimization problem is presented. 
\end{itemize}
Moreover, several supporting consensus algorithms are presented for the purpose of integrating the two proposed algorithms, i.e., the EDP and the load shedding, in a distributed manner. Note that when an overloading condition occurs, an EDP is not feasible due to the lack of available power. Thus, the load shedding can be applied to restore the feasibility of the EDP.

The paper is constructed as follows. Section II describes the problem definitions of the EDP and the priority-considered load shedding. Section III and IV present the algorithm development for the EDP and the priority-considered load shedding, respectively. Section V discussed the necessary additional algorithm for integrating both proposed methods. The proposed algorithms are validated through numerical simulations, which are given in Section VI.

\subsubsection{Notation} We use boldface to distinguish vector $\bm{x}$ in $\mathbb{R}^n$ from the scalar $x$ in $\mathbb{R}$. Given a vector  $\bm{x}$, we denote $\|\bm{x}\|$ as its Euclidean norm, that is $\|\bm{x}\|=\sqrt{\bm{x}^T \bm{x}}$. Given matrix $A\in \mathbb{R}^{n\times n}$, $A\geq 0~(A>0)$ is used to denote a positive semi-definite (definite) matrix $A$. The cardinality of a set $\mathcal{M}$ is denoted by $|\mathcal{M}|$. Consider a set $\mathcal{X}$ defined as $\mathcal{X}=\{x\in \mathbb{R}|x^{min}\leq x\leq x^{max}\}$. We use $\mbox{relint}(\mathcal{X})$ to denote the relative interior of set $\mathcal{C}$ given by $\mbox{relint}(\mathcal{X})=\{x\in \mathbb{R}|x^{min}< x< x^{max}\}$ for any $x^{min}$ and $x^{max}$ satisfying $x^{min}<x^{max}$. In this paper, an overbar is used to denote the optimal solution of an optimization problem. For example, $\thickbar{x}$ represents the optimal solution
of $\min_{x}F(x)$. In order to avoid a heavy notation, the superscript is used in this paper. For instance, $\eta^{(1)}, \eta^{(2)}$, and $\eta^{(3)}$ represent three different variables. For scalars $x_1,x_2,\ldots,x_n$, $\bm{x}=[x_i]$, $i\in \{1,\ldots,n\}$, is the aggregate vector, i.e., $\bm{x}=[x_1,x_2,\ldots,x_n]^T$. 
Hence,  $\bm{x}=[x^{(i)}]$, $i\in \{1,\ldots,n\}$, is equivalent to $\bm{x}=[x^{(1)},x^{(2)},\ldots,x^{(n)}]^T$.

%%%%%%%%%%%%%%%%%%%%%%%%%%%%%%%%%%%%%%%%%%%%%%%%%%%%%%%%%%%%%%%%%%%%%%%%%%%%%%%%%%%%%%%%%%%%%%%
\section{Problem Definition and Assumptions}
Consider the power network with $n$ buses consisting of a generator or (and) a load where each bus is incorporated with a controller, that is an agent, such that each agent is able to communicate with at least one other agent. The communication topology is modeled as an undirected graph  $\mathcal{G}=(\mathcal{V},\mathcal{E})$, where $\mathcal{V}:=\{1,\ldots,n\}$. Denoting $\mathcal{N}_i$ as the set of neighbors of the $i$th agent, $j\in\mathcal{N}_i$ implies $i\in\mathcal{N}_j$.  This paper aims to develop a distributed algorithm for scheduling the generation or (and) the load shedding for each bus.  For this purpose, the agents are allowed to exchange their local information with their neighboring agents through communication network $\mathcal{G}$ satisfying the following assumption. 
\begin{assumption}\label{asm:graph}
	$\mathcal{G}$ is an undirected and connected graph \cite{lewis_2014}.
\end{assumption}
%%%%%%%%%%%%%%%%%%%%%%%%%%%%%%%%%%%

%%%%%%%%%%%%%%%%%%%%%%%%%%%%%%%%%%%
\subsection{EDP considering transmission losses}
Consider a set of buses $\mathcal{V}:=\{1,\ldots,n\}$, where $x_i$ and $d_i$, $i\in \mathcal{V}$, are the $i$th bus's dispatched power and forecast load demand, respectively. In this paper, the EDP is formulated such that power losses are taken into account.
\subsubsection{Transmission loss model}
Denoting $\bm{x}=[x_i]$, $i\in\mathcal{V}$, in this paper, the power losses are modeled as follows
	\begin{eqnarray}\label{eq:loss}
		\Upsilon(\bm{x})=\sum_{i\in\mathcal{V}}\sum_{j\in\mathcal{V}}x_ib_{ij}x_j+\sum_{i\in\mathcal{V}}b_{0i}x_i+b_{00},
	\end{eqnarray}where $B=[b_{ij}]\in\mathbb{R}^{n\times n}$ and $B=B^T$. Without loss of generality and for simplicity, $b_{0i}=b_{00}=0,i=\{1,\ldots,n\}$, is used hereafter. 
	\begin{assumption}\label{asm:Bpos}
		$B\geq 0$.
	\end{assumption}
	Note that under this assumption, there exists a real positive semi-definite matrix $ R=[r_{ij}]\in\mathbb{R}^{n\times n}$ such that $R^TR=B$.
\begin{assumption}\label{asm:global}
	The $i$th agent knows the $i$th column of $R$.
\end{assumption}
\begin{rem}
	The $b$ terms (i.e., $b_{ij},b_{0i}$, and $b_{00}$) in \eqref{eq:loss} are called loss-coefficient which mostly can only be computed by the central controller as extensive amount of global information is required. For example, in \cite[Ch. 13.3]{grainger_stevenson_chang_2016},
	the power losses are computed according to the parameters of transmission lines.
	For the purpose of solving an EDP problem considering the transmission loss in a distributed manner, the operator inevitably has to broadcast the necessary information regarding the loss model to the agents. For instance, the studies in \cite{Xing2015,Kim2019} require the operator to broadcast the value of $b_{i1},b_{i2},\ldots,b_{in}$ to the $i$th agent since $b_{ij}$ can not be computed locally by any agent (including the $i$th agent). In Assumption \ref{asm:global}, we assume the $i$th agent receives the information of the $i$th column of $R$.
	In the light of distributed systems, such assumptions make the algorithm depend on the information from the operator. Nevertheless, since the loss coefficient do not change frequently, it is sufficient for the grid operator to send the information to the agents from time to time \cite[Ch. 13.3]{grainger_stevenson_chang_2016}.
\end{rem}
\subsubsection{EDP formulation}
	This work aims to solve the following problem
	\begin{subequations}\label{edp2}
		\begin{eqnarray}
			\min_{\substack{\bm{x}=[x_i],x_i\in\mathcal{X}_i\\i=1,\ldots,n}}  &\sum_{i\in \mathcal{V}}^{}C_i(x_i),\\
			\mbox{subject to}&\Upsilon(\bm{x})+\sum\limits_{i\in \mathcal{V}}d_i -\sum\limits_{i\in \mathcal{V}}x_i=0,\label{eqbal}
		\end{eqnarray}
\end{subequations}
where $\Upsilon(\mathbf{x})$ is given in \eqref{eq:loss}, $C_i(\cdot)$ denotes the generation cost of the $i$th generator, and 
\begin{eqnarray*}
	\mathcal{X}_i:=\{x_i|x_i^{min}\leq x_i \leq x_i^{max}\}.
\end{eqnarray*}
For the buses with no generator, $x_i^{min}=x_i^{max}=0$ and $C_i=0$, otherwise $0\leq x_i^{min}<x_i^{max}$. Moreover, if the $i$th bus has no load, then $d_i=0$. Note that \eqref{eqbal} represents the supply-demand balance constraint considering the transmission losses.
\subsubsection{Assumptions for the EDP}The following describes the assumptions considered in this paper.
\begin{assumption}\label{asm:conv} For all $i\in\mathcal{V}$, the function $C_i:\mathbb{R}\rightarrow\mathbb{R}$ is proper, closed, and $c_i$-strongly convex,  i.e., for all $x,y\in\mathbb{R}$ we have 
		\begin{eqnarray*}
			C_i(y)\geq  C_i(x)+\nabla C_i(x)\cdot(y\!-\!x)+\frac{c_i}{2}(y-x)^2.
		\end{eqnarray*}
		Moreover, $\nabla C_i(x_i)\geq0$ for all $x_i\in\mathcal{X}_i$ and $\nabla C_i(x_i)>0$ for all $x_i \in \mbox{relint}(\mathcal{X}_i)$.
\end{assumption}
\begin{assumption}\label{asm:feas}
	Denoting $\bm{x}^{min}\!=\![x_i^{min}]$, $\bm{x}^{max}\!=\![x_i^{max}]$, for all $i\in \mathcal{V}$, it follows that
	\begin{enumerate}[start=1,label={\bfseries A5.\arabic*},leftmargin=*]
		\item $\sum_{i\in \mathcal{V}}x_i^{min}<\sum_{i\in \mathcal{V}}d_i+\Upsilon(\bm{x}^{min})$,
		\item $\Upsilon(\bm{x}^{min})\! \leq\! \sum_{i\in \mathcal{V}}(x_i^{min})$, $\Upsilon(\bm{x}^{max})\! \leq\! \sum_{i\in \mathcal{V}}(x_i^{max})$,
	\end{enumerate}
	where $0\leq x_i^{min}\leq x_i^{max}$ and $d_i\geq 0$.
\end{assumption}

{Note that it is easy to satisfy Assumption \ref{asm:conv}.} For instance, it is widely known that the generation cost can be modeled as $C(x)=\mathsf{a}x^2+\mathsf{b}x+\mathsf{c}$ with $\mathsf{a}_i\geq0,\mathsf{b}_i\geq 0$, $\mathsf{c}\in\mathbb{R}$, and $x> 0$. 
Moreover, Assumption {A5.1} indicates that the required supply-demand balance \eqref{eqbal} is feasible under the minimum-capacity constraint, and {A5.2} means that the total transmission losses are smaller than the total available power. 
Under the assumption that \eqref{edp2} is feasible, we aim to solve the problem using a fully distributed algorithm. The next section presents the problem formulation for the load shedding. Note that the load shedding can be applied for handling the case when
an overloading condition occurs.
%%%%%%%%%%%%%%%%%%%%%%%%%%%%%%%%%%%

%%%%%%%%%%%%%%%%%%%%%%%%%%%%%%%%%%%
\subsection{Priority-considered load shedding}
Let $y_{tot}$ be the total amount of load shedding to be scheduled in the load buses and $y_i$ be the shed load at the $i$th bus such that $\sum_{i=1}^{n}y_i=y_{tot}$.  Denoting $y_i^{max}$ as the maximum allowed load shedding at the $i$th bus, it follows that $0\leq y_i\leq y_i^{max}$. In this paper, it is assumed that the load buses are divided into the following categories.
\begin{itemize}
	\item Prioritized shedding buses ($i\in \mathcal{M}$): The $i$th bus in this category is shed according to its predefined priority order $p_i$, where $p_i$ denotes the priority order for the $i$th bus and the set $\mathcal{M}$ is the set of buses belonging to this category.  
	\item Regular load buses ($i\!\in\! \mathcal{V}\backslash\mathcal{M}$):  Upon shedding the buses belonging to this set, 
	the customers suffer from not using their appliances and $D_i(y_i)$ represents the value of damages or loss due to the shed load $y_i$. Moreover, to alleviate the pain of load shedding, each customer receives an incentive payment  $\mathsf{r}_iy_i$, $\mathsf{r}_i>0$. 
\end{itemize}

\subsubsection{\textbf{LS (Load Shedding) Problem}}In this work, we aim to design a distributed algorithm for scheduling the load shedding such that the following conditions are met:
	\begin{enumerate}
		\item It is desirable that the load shedding is determined in the buses belonging to the set $\mathcal{M}$ first, before cutting the loads in  $\mathcal{V}\backslash\mathcal{M}$.
		\item Among the buses belonging to the set $\mathcal{M}$, $y_i$ is selected according to predefined priority $p_i$ without knowing the priority order of other customers.
		\item Due to the DR, the customer in the set  $\mathcal{V}\backslash\mathcal{M}$ aims to maximize its utility represented by $\mathsf{r}_iy_i-D_i(y_i)$ \cite{Nali2011}. In other words, $y_i$ is determined such that $\sum_{i\in\mathcal{V\backslash M}}D_i(y_i)-\mathsf{r}_iy_i$ is minimized. 
\end{enumerate}
\subsubsection{Notation for the priority-considered load shedding}

Given the desired priority order $p_i$ for each $i$th bus, $i\in \mathcal{M}$, suppose that the lowest order of priority is $m$, that is, $\max_i p_i=:m$. Introduce the set $\mathcal{M}_\ell$  given by $\mathcal{M}_\ell:=\{i|p_i=\ell\},  \ell=1,2,\ldots,m$.
Then, $\mathcal{M}$ is their union, i.e., $\mathcal{M}=\mathcal{M}_1\cup\ldots\cup\mathcal{M}_m$.

\subsubsection{Assumption for the load shedding}
\begin{assumption}\label{asm:conv2}For all $i\in\mathcal{V}$, the function $D_i:\mathbb{R}\rightarrow\mathbb{R}$ is proper, closed, and strongly convex.
\end{assumption}
In \cite{Nali2011}, a quadratic function is used to model the load shedding cost $D_i(y_i)$.
%%%%%%%%

%%%%%%%%%%%%%%%%%%%%%%%%%%%%%%%%%%%%%%%%%%%%%%%%%%%%%%%%%%%%%%%%%%%%%%%%%%%%%%%%%%%%%%%%%%%%%%%
\section{Proposed EDP Considering Power Losses}
This section is entirely dedicated to present the solution of the considered EDP described in the previous section. Let us assume that the EDP \eqref{edp2} is feasible and that the following holds.
\begin{assumption}\label{asm:slater}[\textit{Slater's condition}]
	There exists a point $(\tilde{x}_1,\ldots,\tilde{x}_n)$ that lies in the interior of $\mathcal{X}_1\times\mathcal{X}_2\times\ldots\times \mathcal{X}_n$ such that $\Upsilon(\tilde{\bm{x}})+\sum_{i\in \mathcal{V}}d_i -\sum_{i\in \mathcal{V}}\tilde{x}_i = 0$ where $\tilde{\bm{x}}=[\tilde{x}_i]$.
\end{assumption}
%%%%%%%%%%%%%%%%%%%%%%%%%%%%%%%%%%%
\subsection{Development of a distributed algorithm for the EDP}
This section describes several equivalent optimization problems that are helpful for solving \eqref{edp2} in a distributed manner.
\subsubsection{An equivalent optimization problem for the EDP}Let us introduce a variable $u_i\in\mathbb{R}$, $i\in\mathcal{V}$. Note that the following optimization problem has the same solution as  \eqref{edp2}.
	\begin{subequations}\label{edp2equiv}
		\begin{eqnarray}
			\min_{\substack{\bm{x}=[x_i],\bm{u}=[u_i],\\x_i\in\mathcal{X}_i,\forall i\in\mathcal{V}}} &\sum_{i\in \mathcal{V}}^{}C_i(x_i),\\
			\mbox{subject to }&\sum\limits_{i\in \mathcal{V}}u_i^2+\sum\limits_{i\in \mathcal{V}}d_i -\sum\limits_{i\in \mathcal{V}}x_i=0,\label{eqbalequiv}\\
			&R\bm{x}=\bm{u},
		\end{eqnarray}
	\end{subequations}
	where $R^TR=B$. The optimization problem is defined such that a new slack variable $u_i$ is added and the loss model $\bm{x}^TB\bm{x}$ is substituted by $\bm{u}^T\bm{u}$ where $\bm{u}=R\bm{x}$. Since the feasible set and the objective function of \eqref{edp2equiv} are the same as those of \eqref{edp2}, the optimal solution of \eqref{edp2} can be found by solving \eqref{edp2equiv}. 
\subsubsection{Relaxation method for the non-convex EDP}
In \eqref{edp2equiv}, the problem is non-convex due to the quadratic equality constraint \eqref{eqbalequiv}; thus, obtaining the optimal solution can be non-trivial. To deal with this problem, consider the following optimization problem
	\begin{subequations}\label{edp2rel}
		\begin{eqnarray}
			\min_{\substack{\bm{x}=[x_i],\bm{u}=[u_i],\\x_i\in\mathcal{X}_i,\forall i\in\mathcal{V}}} &\sum_{i\in \mathcal{V}}^{}C_i(x_i),\\
			\mbox{subject to }&\sum\limits_{i\in \mathcal{V}}u_i^2+\sum\limits_{i\in \mathcal{V}}d_i -\sum\limits_{i\in \mathcal{V}}x_i\leq0,\label{eqbalrel}\\
			&R\bm{x}=\bm{u}\label{pxu}.
		\end{eqnarray}
	\end{subequations}Note that this problem is convex since the quadratic equality in \eqref{eqbalequiv} is relaxed into an inequality. 
	Nevertheless, the following theorem shows that the optimal solution of the relaxed problem \eqref{edp2rel} is also that of the EDP \eqref{edp2equiv}, equivalently \eqref{edp2}. 
\begin{thm}\label{thm:relax} [\textbf{Relaxation}]
		Suppose that Assumptions \ref{asm:Bpos}, \ref{asm:conv}, \ref{asm:feas}, and \ref{asm:slater} hold. If $(\thickbar{x}_i,\thickbar{u}_i),i\in\mathcal{V},$ is the optimal solution of \eqref{edp2rel}, then, $(\thickbar{x}_i,\thickbar{u}_i),i\in\mathcal{V},$ is also the optimal solution of \eqref{edp2equiv}.
\end{thm}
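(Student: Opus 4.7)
The plan is to establish tightness of the relaxation, i.e., that the inequality \eqref{eqbalrel} is active at any optimum of \eqref{edp2rel}. Once this is shown, any optimizer $(\thickbar{\bm{x}},\thickbar{\bm{u}})$ of \eqref{edp2rel} automatically satisfies \eqref{eqbalequiv} and $R\bm{x}=\bm{u}$, and so is feasible for \eqref{edp2equiv}; since \eqref{edp2rel} is a relaxation of \eqref{edp2equiv} with the same objective, such a point is then also optimal for \eqref{edp2equiv}. So the entire content of the theorem reduces to proving that the slack in \eqref{eqbalrel} vanishes at the optimum.

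The strategy for that step is a perturbation-based contradiction. Suppose $(\thickbar{\bm{x}},\thickbar{\bm{u}})$ is optimal and the relaxed inequality is strict, with slack $\delta:=\sum_i \thickbar{x}_i-\sum_i\thickbar{u}_i^2-\sum_i d_i>0$; using $\thickbar{\bm{u}}=R\thickbar{\bm{x}}$ and $R^TR=B$ this is $\delta=\sum_i\thickbar{x}_i-\Upsilon(\thickbar{\bm{x}})-\sum_i d_i>0$. I would then locate an index $i^*$ at which $\thickbar{x}_{i^*}$ can be decreased. Assumption A5.1 guarantees existence of such a coordinate: plugging $\bm{x}^{min}$ into the left-hand side of \eqref{eqbalrel} gives a strictly positive value, so $\bm{x}^{min}$ itself is infeasible and $\thickbar{\bm{x}}\neq \bm{x}^{min}$, i.e.\ some $\thickbar{x}_{i^*}>x_{i^*}^{min}$. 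Next, I would verify $\nabla C_{i^*}(\thickbar{x}_{i^*})>0$: if $\thickbar{x}_{i^*}\in\mbox{relint}(\mathcal{X}_{i^*})$, Assumption \ref{asm:conv} gives this directly; if $\thickbar{x}_{i^*}=x_{i^*}^{max}$, strong convexity combined with $\nabla C_{i^*}(x_{i^*}^{min})\geq 0$ yields $\nabla C_{i^*}(x_{i^*}^{max})\geq c_{i^*}(x_{i^*}^{max}-x_{i^*}^{min})>0$.

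With $i^*$ in hand, I would form $\hat{\bm{x}}$ by replacing $\thickbar{x}_{i^*}$ with $\thickbar{x}_{i^*}-\epsilon$ and set $\hat{\bm{u}}:=R\hat{\bm{x}}$. For $\epsilon>0$ small, $\hat{x}_{i^*}\in\mathcal{X}_{i^*}$, the linear relation $R\hat{\bm{x}}=\hat{\bm{u}}$ holds by construction, and continuity of the left-hand side of \eqref{eqbalrel} together with $\delta>0$ keeps the relaxed inequality satisfied; positivity of $\nabla C_{i^*}(\thickbar{x}_{i^*})$ then makes $\sum_i C_i(\hat{x}_i)<\sum_i C_i(\thickbar{x}_i)$, contradicting optimality. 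The main obstacle is not computational but organizational: one must simultaneously produce a strict-descent coordinate for the cost and a feasibility slack in the quadratic constraint at that same coordinate. The subtle part is the boundary case $\thickbar{x}_{i^*}=x_{i^*}^{max}$, where Assumption \ref{asm:conv} alone does not give a strictly positive gradient at the boundary; strong convexity is what bridges this gap and makes the downward perturbation cost-reducing.
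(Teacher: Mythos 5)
Your argument is correct, but it takes a different route from the paper. The paper proves the result through the KKT system of \eqref{edp2rel}: invoking Slater's condition so that the KKT conditions are necessary and sufficient, it assumes the multiplier $\thickbar{\lambda}$ of \eqref{eqbalrel} is zero, deduces from stationarity in $u_i$ that $\thickbar{\bm{\xi}}=\bm{0}$, runs a case analysis on the bound multipliers $\thickbar{\varphi}_i,\thickbar{\sigma}_i$ using the sign conditions on $\nabla C_i$ to force $\thickbar{x}_i=x_i^{min}$, and then contradicts A5.1; complementary slackness with $\thickbar{\lambda}>0$ then yields the equality. You instead give a purely primal perturbation argument: if the relaxed constraint had slack, A5.1 shows $\thickbar{\bm{x}}\neq\bm{x}^{min}$, so some coordinate can be decreased while preserving feasibility (by continuity and the slack $\delta>0$), and the positive gradient of $C_{i^*}$ at and just below $\thickbar{x}_{i^*}$ (via $\nabla C_i>0$ on $\mbox{relint}(\mathcal{X}_i)$, and strong convexity at the upper boundary) makes this perturbation strictly cost-reducing, contradicting optimality. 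Both hinge on the same two ingredients the paper highlights --- monotonicity of $C_i$ and A5.1 --- and your reduction of the theorem to tightness of \eqref{eqbalrel} is exactly right. Your route is more elementary: it does not need Slater's condition or any duality for this theorem. What the paper's heavier KKT route buys is the explicit conclusion $\thickbar{\lambda}>0$, which is reused later (in the proof of Theorem \ref{thm:edp}) to establish strong convexity of the local Lagrangians; your argument would have to be supplemented by a separate dual argument to recover that fact. One small point to tighten: to conclude $C_{i^*}(\thickbar{x}_{i^*}-\epsilon)<C_{i^*}(\thickbar{x}_{i^*})$ it is cleanest to note that every point of $(x_{i^*}^{min},\thickbar{x}_{i^*})$ lies in $\mbox{relint}(\mathcal{X}_{i^*})$, so $\nabla C_{i^*}>0$ there and $C_{i^*}$ is strictly increasing on that interval; positivity of the gradient only at the single point $\thickbar{x}_{i^*}$ requires an extra appeal to differentiability to yield strict local decrease.
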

\begin{proof}
	Let $r_{ij}$ be the element of $R$ on the $i$th row and the $j$th column, and the Lagrangian function of \eqref{edp2rel} given as
	\begin{eqnarray}\label{globallagr}
		\mathcal{L}(\bm{x},\bm{u},\lambda,\bm{\xi})\!\!\!&=&\!\!\!\sum_{i\in\mathcal{V}}C_i(x_i)+\lambda\sum\limits_{i\in \mathcal{V}}\left(u_i^2+d_i -x_i\right)\nonumber\\&&+
		\sum_{j\in\mathcal{V}}\xi^{(j)}\left(-u_{j}+\sum\limits_{i\in \mathcal{V}}r_{ji}x_i\right),
	\end{eqnarray}
	where $x_i\in\mathcal{X}_i,\bm{x}=[x_i],\bm{u}=[u_i]$, $i\in\mathcal{V},$ $\bm{\xi}=[\xi^{(j)}]$, $j\in\mathcal{V},$ and $\lambda\geq 0$. Denote $(\thickbar{x}_i,\thickbar{u}_i)$ and $(\thickbar{\lambda},\thickbar{\bm{\xi}})$ as the primal and dual optimal solution of \eqref{edp2rel}, respectively. Since \eqref{edp2rel} is convex and the Slater's condition holds, the KKT conditions give the necessary and sufficient condition for the optimal solution of \eqref{edp2rel}. In this paper, the theorem is proven by showing that the following holds.
	\begin{subequations}\label{kkt}
		\begin{eqnarray}
			&\thickbar{x}_i\in\mathcal{X}_i,\quad\!(\thickbar{x}_i,\thickbar{u}_i)=\argmin\limits_{x_i\in\mathcal{X}_i,u_i}\mathcal{L}(\bm{x},\bm{u},\lambda,\bm{\xi}),i\in\mathcal{V},\\
			&\thickbar{\lambda}>0,\sum_{i\in \mathcal{V}}\thickbar{u}_i^2+d_i -\thickbar{x}_i=0,\label{kkt2}\\
			&\sum_{i\in \mathcal{V}}r_{ji}\thickbar{x}_i=\thickbar{u}_j,~j=1,\ldots,n.\label{kkt3}
		\end{eqnarray}
	\end{subequations}
	Observe that the equality in \eqref{kkt2} shows that the optimal solution of \eqref{edp2rel} is also the optimal solution of \eqref{edp2equiv}. The optimality conditions given in  \eqref{kkt} is proven in the Appendix.  The proof shows that the monotonicity of the objective functions $C_i(x_i)$ on $\mbox{relint}(X_i)$ plays a key role in obtaining the result.
\end{proof}
In light of Theorem \ref{thm:relax}, the optimal solution \eqref{edp2equiv} can be obtained through the relaxed problem \eqref{edp2rel}.

\subsubsection{Utilizing a redundant constraint for the distributed EDP}
Consider the following optimization problem
	\begin{eqnarray}\label{edp2relmod}
		\min_{\substack{\bm{x}=[x_i],\bm{u}=[u_i],\\x_i\in\mathcal{X}_i,\forall i\in\mathcal{V}}} &\sum_{i\in \mathcal{V}}^{}C_i(x_i),\\
		\mbox{subject to }&\eqref{eqbalrel},~\eqref{pxu},u_i\in\mathcal{U},~i\in\mathcal{V},\nonumber
	\end{eqnarray}
	where $\mathcal{U}:=\{u|-u^{max}\leq u\leq u^{max}\}$ and 
	\begin{eqnarray}\label{setU}
		u^{max}\!=\! \sum_{i\in\mathcal{V}}(r_{i}^{max}x_i^{max}),
	\end{eqnarray}
	and $ r_i^{max}:=\max_{j=1,\ldots,n} |r_{ji}| $.
Observe that \eqref{edp2relmod} is formulated by adding a new constraint $u_i\in\mathcal{U}$ to \eqref{edp2rel}, and that the following holds
\begin{eqnarray*}
	&\!\!\!\!\!\!\!\!\! \sum_{j\in\mathcal{V}}r_{ij}x_j^{max}\leq u^{max},~\forall i,j\in\mathcal{V}.
\end{eqnarray*}
Moreover, since $x_i^{min}\leq x_i^{max}$, $-u^{max}\leq \sum_{j\in\mathcal{V}}r_{ij}x_j^{min}$, the constraint $u_i\in\mathcal{U}$ is always satisfied as long as $x_i$ and $u_i$ satisfy the constraints in \eqref{edp2rel}: $x_i\in\mathcal{X}_i$ and \eqref{pxu}. In other words, the new constraint $u_i\in\mathcal{U}$ does not alter the feasible region \eqref{edp2relmod}. As a result, the optimal solution of \eqref{edp2relmod} is equivalent to that of \eqref{edp2rel}. In view this, the optimal solution of \eqref{edp2rel} can be obtained by solving \eqref{edp2relmod}. Clearly, it is redundant to specify $u_i\in\mathcal{U}$ given that $x_i\in\mathcal{X}_i$ and \eqref{pxu} hold. However, it is shown later (c.f. Theorem \ref{thm:edp}) that the constraint $u_i\in\mathcal{U}$ is useful for designing a distributed algorithm to obtain the optimal solution of \eqref{edp2relmod}.
\subsubsection{Dual problem}
Let $\thickbar{\mathcal{S}}$ be the set of the optimal solutions of \eqref{edp2relmod}. In view of $\mathcal{X}_i$,
the feasible region of \eqref{edp2relmod} is closed and bounded. From Assumption \ref{asm:conv}, the objective function is convex and continuous. Thus, according to the Weierstrass theorem \cite{Bertsekas1989}, there exists an optimal point $(\thickbar{\bm{x}},\thickbar{\bm{u}})\in\thickbar{\mathcal{S}}$ that achieves  a global optimum. Moreover, under the Slater's condition given in Assumption \ref{asm:slater}, the strong duality holds and the set of the dual optimal solutions is nonempty. Hence, an algorithm based on a primal-dual method can be used for obtaining the optimal solution of \eqref{edp2relmod}. To this end, consider $\lambda_i,\xi_i^{(1)},\ldots,\xi_i^{(n)}$ as the $i$th agent's estimates of the Lagrangian multipliers $\lambda,\xi^{(1)},\ldots,\xi^{(n)}$, and the following local Lagrangian function
	\begin{eqnarray}\label{locallagr}
		\mathcal{L}_i(x_i,u_i,\lambda_i,\bm{\xi}_i)\!\!\!&=&\!\!\!C_i(x_i)+\lambda_i(u_i^2+d_i -x_i)+\xi_i^{(i)}(r_{ii}x_i-u_{i})+\!\!\!\!\sum_{j\in\mathcal{V},j\not=i}\xi_i^{(j)}r_{ji}x_i,\nonumber
\end{eqnarray}\noindent where $\bm{\xi}_i=[\xi_i^{(1)},\ldots,\xi_i^{(n)}]^T$, $x_i\in\mathcal{X}_i$, and $u_i\in\mathcal{U}$. Observe that $\sum_{i\in\mathcal{V}}\mathcal{L}_i=\mathcal{L}$ if and only if  $\bm{\xi}_i=\bm{\xi}_j$ and $\lambda_i=\lambda_j$ hold for all $i,j\in\mathcal{V}$.
	Denoting $h_i(\lambda_i,\bm{\xi}_i)=\inf_{x_i\in\mathcal{X}_i,u_i\in\mathcal{U}}\mathcal{L}_i(x_i,u_i,\lambda_i,\bm{\xi}_i)$, the dual problem of \eqref{edp2relmod} can be formulated as follows:
	\begin{subequations}\label{dualprob2}
		\begin{eqnarray}
			\min\limits_{\substack{\lambda_i\geq 0,\bm{\xi}_i,\\i=1,\ldots,n}}&& \sum_{i\in\mathcal{V}}q_i(\lambda_i,\bm{\xi}_i)\\
			\mbox{subject to }&&\lambda_{i}=\lambda_{j},~~\forall i,j\in\mathcal{V},\\
			&&\bm{\xi}_i=\bm{\xi}_j,~~\forall i,j\in\mathcal{V},
		\end{eqnarray}
	\end{subequations}
	where $q_i(\lambda_i,\bm{\xi}_i):=-h_i(\lambda_i,\bm{\xi}_i)$. Note that the objective function of \eqref{dualprob2} is a summation of several convex functions.  
	Utilizing this structure, the focus of the problem is placed on designing a distributed algorithm for \eqref{dualprob2}.
%%%%%%%%%%%%%%%%%%%%%%%%%%%%%%%%%%%

%%%%%%%%%%%%%%%%%%%%%%%%%%%%%%%%%%%
\subsection{Proposed distributed algorithm for the EDP}
In this section, the algorithm for solving the dual problem \eqref{dualprob2} is designed based on the distributed subgradient method \cite{Nedic2010}.  
In addition to $x_i,u_i,\lambda_i,$ and $\bm{\xi}_i$, let us introduce two variables $v_i\in\mathbb{R}$ and $\bm{w}_i\in\mathbb{R}^{n}$, $i\in\mathcal{V}$. The sequences of $x_i(k),u_i(k)$ for $k\geq0$ are defined as follows
	\begin{subequations}\label{alg_edp}
		\begin{eqnarray}
			\!\!v_i(k)\!\!\!\!\!&=&\!\!\!\!\!\sum\limits_{j\in\mathcal{V}}a_{ij}\lambda_j(k),\label{alg_edp0}\\
			\!\!\bm{w}_i(k)\!\!\!\!\!&=&\!\!\!\!\!\sum\limits_{j\in\mathcal{V}}a_{ij}\bm{\xi}_j(k),\label{alg_edp1}
		\end{eqnarray}
	\end{subequations}
	\vspace{-0.5cm}
	\begin{subequations}[resume]
		\begin{eqnarray}
			\!\!\!	\!\!\!\!\!\!\!\!\!\!\!\!\!\!\!\!\!\!\!&&[x_i(k\!+\!1),u_i(k\!+\!1)]\!=\!\inf_{\substack{x_i\in\mathcal{X}_i,u_i\in\mathcal{U}}}\!\!\!\mathcal{L}_i(x_i,u_i,v_i(k),\bm{w}_i(k)),\label{alg_edp_prim}\\
			\!\!\!\!\!\!	\!\!\!\!\!\!\!\!\!\!\!\!\!\!\!\!&&\lambda_i(k\!+\!1)\!=\!\mathcal{P}_{\mathcal{R}}\!\left[v_i(k)\!+\!\alpha(k)\left(u_i^2(k\!+\!1)\!+\!d_i\! -\!x_i(k\!+\!1)\right)\right]\!\!,~~\label{alg_edp3}\\
			\!\!\!\!\!\!\!\!\!\!\!\!\!\!\!\!\!\!\!\!\!\!&&\bm{\xi}_i(k\!+\!1)\!=\!\bm{w}_i(k)+\alpha(k)\bm{g}_i(x_i(k+1),u_i(k+1)),\label{alg_edp4}
		\end{eqnarray}
\end{subequations}
where $\alpha(k)$ is a stepsize sequence, $\mathcal{P}_{\mathcal{R}}$ is a projection onto the set $\mathcal{R}=\{\lambda|\lambda\geq 0\}$, $\bm{g}_i(x_i,u_i):=[g_i^{(1)}(x_i,u_i),\ldots,g_i^{(n)}(x_i,u_i)]^T$, and
\begin{eqnarray}\label{gdef}
		g_i^{(j)}(x_i,u_i)=\left\{\!\!\begin{array}{cc}r_{ii}x_i-u_i	&\mbox{if }j=i,\\
			r_{ji}x_i&\mbox{otherwise}.
		\end{array}\right.
\end{eqnarray}\noindent Moreover, $A:=[a_{ij}]$ is a doubly stochastic matrix, i.e., $\sum_{i\in\mathcal{V}}a_{ij}=\sum_{j\in\mathcal{V}}a_{ji}=1$ with $a_{ii}>0$, such that
$a_{ij}>0$ if and only if $j\in\mathcal{N}_i$ otherwise $a_{ij}=0$

In view of \eqref{alg_edp0}-\eqref{alg_edp1}, the $i$th agent broadcasts $\lambda_{i}(k)$ and $\bm{\xi}_{i}(k)$ to its neighboring agents which are used to update $v_j(k)$ and $\bm{w}_j(k)$, $j\in \mathcal{N}_i$. In step \eqref{alg_edp3}-\eqref{alg_edp4}, $\lambda_i$ and $\bm{\xi}_{i}$ are updated by moving the current values along the subgradient of $q_i$ at $v_i(k)$ and $\bm{w}_i(k)$ with step size $\alpha(k)$. Projection $\mathcal{P}_{\mathcal{R}}[\cdot]$ is used to make $\lambda_{i}$ nonnegative. With Assumption \ref{asm:global} in mind,  the proposed algorithm \eqref{alg_edp} can be implemented in a distributed way if $u_{max}$ is known to all agents. Note that, from the definition of $u_{max}$ given in \eqref{setU}, an average consensus method can be applied to generate $u_{max}$ in a distributed manner before running \eqref{alg_edp}. Furthermore, the proposed method does not need any inner loop algorithm at each iteration $k$.
%%%%%%%%%%%%%%%%%%%%%%%%%%%%%%%%%%%

%%%%%%%%%%%%%%%%%%%%%%%%%%%%%%%%%%%
\subsection{Convergence analysis}
Let $(\thickbar{x}_i,\thickbar{u}_i)$ and $(\thickbar{\lambda},\thickbar{\bm{\xi}})$ be the optimal solution of \eqref{edp2relmod} and \eqref{dualprob2} respectively, $i\in\mathcal{V}$. The following shows the convergence of \eqref{alg_edp}.
	\begin{thm}\label{thm:edp}
		Suppose that the EDP \eqref{edp2} is feasible, and that Assumptions \ref{asm:graph}, \ref{asm:Bpos}, \ref{asm:conv}, \ref{asm:feas}, and \ref{asm:slater} hold true. Consider the sequences \eqref{alg_edp} with a step size $\alpha(k)$ satisfying $\sum_{k=0}^{\infty}\alpha(k)=\infty$ and $\sum_{k=0}^{\infty}\alpha^2(k)<\infty$.
		Then, for all $ i\in\mathcal{V}$,  $(\lambda_i(k),\bm{\xi}_i(k))\!\!\rightarrow\!\!(\thickbar{\lambda},\thickbar{\bm{\xi}})$ and $(x_i(k),u_i(k))\rightarrow(\thickbar{x}_i,\thickbar{u}_i)$ as $k\rightarrow \infty$.
\end{thm}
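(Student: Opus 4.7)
The plan is to frame the update \eqref{alg_edp} as a distributed subgradient method applied to the dual problem \eqref{dualprob2} and invoke the convergence result of \cite{Nedic2010}, then recover primal convergence from dual convergence by exploiting strong convexity. First, I would verify the three standing prerequisites for a distributed subgradient scheme: (i) the weight matrix $A$ is doubly stochastic with $a_{ii}>0$ and is compatible with the connected undirected graph $\mathcal{G}$ (Assumption \ref{asm:graph}); (ii) the dual problem \eqref{dualprob2} admits an optimal primal–dual pair, which follows because \eqref{edp2relmod} is convex with compact feasible set, and because Assumption \ref{asm:slater} (Slater) together with the equivalence of \eqref{edp2rel} and \eqref{edp2relmod} guarantees strong duality and nonempty dual optimal set; (iii) the step size satisfies $\sum \alpha(k)=\infty$ and $\sum\alpha^2(k)<\infty$, which is assumed.

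The key technical input is bounded subgradients of $q_i(\lambda_i,\bm{\xi}_i)=-h_i(\lambda_i,\bm{\xi}_i)$. This is precisely where the redundant constraint $u_i\in\mathcal{U}$ introduced in \eqref{edp2relmod} is used: since $\mathcal{X}_i$ and $\mathcal{U}$ are compact intervals, the minimizer in \eqref{alg_edp_prim} always exists and is confined to a bounded set independent of $(v_i(k),\bm{w}_i(k))$. Consequently, the vector
\[
\bigl(u_i^2(k{+}1)+d_i-x_i(k{+}1),\,\bm{g}_i(x_i(k{+}1),u_i(k{+}1))\bigr)
\]
used in \eqref{alg_edp3}–\eqref{alg_edp4} is a subgradient of $q_i$ at $(v_i(k),\bm{w}_i(k))$ and is uniformly bounded in $k$. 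Having a uniform subgradient bound, a doubly stochastic $A$ with positive diagonal on a connected graph, and a diminishing step size, I would invoke \cite[Prop.~2.8 / Theorem 5.1]{Nedic2010} (or its nonnegative-projection variant) to conclude that there exists a dual optimum $(\thickbar{\lambda},\thickbar{\bm{\xi}})$ of \eqref{dualprob2} such that the consensus–disagreement $\|\lambda_i(k)-\lambda_j(k)\|$ and $\|\bm{\xi}_i(k)-\bm{\xi}_j(k)\|$ vanish, and $(\lambda_i(k),\bm{\xi}_i(k))\to(\thickbar{\lambda},\thickbar{\bm{\xi}})$ for every $i\in\mathcal{V}$.

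For the primal part, I would argue continuity of the argmin map. From \eqref{alg_edp0}–\eqref{alg_edp1} and the consensus conclusion, $(v_i(k),\bm{w}_i(k))\to(\thickbar{\lambda},\thickbar{\bm{\xi}})$. The map $(\lambda,\bm{\xi})\mapsto\arg\min_{x_i\in\mathcal{X}_i,u_i\in\mathcal{U}}\mathcal{L}_i(x_i,u_i,\lambda,\bm{\xi})$ is single-valued and continuous provided the minimizer is unique at the limit point: uniqueness in $x_i$ follows from $c_i$-strong convexity of $C_i$ (Assumption \ref{asm:conv}) together with the linearity of the remaining terms in $x_i$; uniqueness in $u_i$ follows from the strict convexity of $\lambda u_i^2-\xi^{(i)}u_i$ when $\lambda>0$, and Theorem \ref{thm:relax} established $\thickbar{\lambda}>0$. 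Continuity of the argmin over compact constraint sets then gives $(x_i(k+1),u_i(k+1))\to(\thickbar{x}_i,\thickbar{u}_i)$, which coincides with the primal optimum of \eqref{edp2relmod} (and hence of \eqref{edp2}) by the KKT characterization used inside the proof of Theorem \ref{thm:relax}.

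The main obstacle is the technical bookkeeping to place \eqref{alg_edp} verbatim into the template of \cite{Nedic2010}: the dual decision variable is the stacked $(\lambda_i,\bm{\xi}_i)\in\mathbb{R}^{n+1}$ with the single nonnegativity constraint $\lambda_i\geq 0$ handled via the projection $\mathcal{P}_{\mathcal{R}}$ in \eqref{alg_edp3}, while $\bm{\xi}_i$ is unconstrained. One must check that the projection step does not break the subgradient bound (it does not, because $\mathcal{R}$ is convex and contains $\thickbar{\lambda}$) and that the unconstrained component can be incorporated by taking the projection onto $\mathbb{R}^{n+1}_+\times\{(\lambda\geq0)\}\times\mathbb{R}^n$. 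Equally delicate is ensuring that the primal argmin is actually attained uniquely for all sufficiently large $k$; this relies crucially on $\thickbar{\lambda}>0$ (inherited from Theorem \ref{thm:relax}) and on the redundant bounding set $\mathcal{U}$, so that the primal iterates cannot escape to infinity when $\lambda_i(k)$ happens to be near zero along the trajectory.
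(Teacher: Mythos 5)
Your proposal is correct, and the dual half is essentially the paper's argument: both of you cast \eqref{alg_edp} as a distributed subgradient method on \eqref{dualprob2}, use the compactness of $\mathcal{X}_i\times\mathcal{U}$ (the whole point of the redundant constraint $u_i\in\mathcal{U}$) to get a uniform subgradient bound, and invoke the convergence result of \cite{Nedic2010} together with Slater's condition to conclude $(\lambda_i(k),\bm{\xi}_i(k))\to(\thickbar{\lambda},\thickbar{\bm{\xi}})$. Where you genuinely diverge is the primal recovery. The paper fixes the dual variables at $(\thickbar{\lambda},\thickbar{\bm{\xi}})$, writes the strong-convexity inequality \eqref{sub2} for $\mathcal{L}_i$ (valid because $\thickbar{\lambda}>0$ and $C_i$ is strongly convex), notes that $\nabla\mathcal{L}_i(x_i(k+1),u_i(k+1),\thickbar{\lambda},\thickbar{\bm{\xi}})\to 0$, and then proves a separate claim --- via the zero duality gap and a chain of subgradient inequalities --- that the Lagrangian gap $\sum_i\bigl(\mathcal{L}_i(\thickbar{x}_i,\thickbar{u}_i,\thickbar{\lambda},\thickbar{\bm{\xi}})-\mathcal{L}_i(x_i(k+1),u_i(k+1),\thickbar{\lambda},\thickbar{\bm{\xi}})\bigr)$ vanishes, which squeezes $\|(x_i(k+1),u_i(k+1))-(\thickbar{x}_i,\thickbar{u}_i)\|\to 0$. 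You instead argue continuity of the argmin map at the dual optimum: since the minimizer of $\mathcal{L}_i(\cdot,\cdot,\thickbar{\lambda},\thickbar{\bm{\xi}})$ over the compact set $\mathcal{X}_i\times\mathcal{U}$ is unique (strong convexity in $x_i$, strict convexity in $u_i$ from $\thickbar{\lambda}>0$) and must equal $(\thickbar{x}_i,\thickbar{u}_i)$ by the KKT characterization, upper hemicontinuity of the argmin correspondence plus $(v_i(k),\bm{w}_i(k))\to(\thickbar{\lambda},\thickbar{\bm{\xi}})$ gives the result. Both routes rest on the same two ingredients ($\thickbar{\lambda}>0$ from Theorem \ref{thm:relax} and strong convexity of $C_i$); yours avoids the paper's explicit duality-gap computation (Claim 1) at the cost of invoking a Berge-type maximum-theorem argument and of being a little careful about possible non-uniqueness of the argmin along the trajectory when $v_i(k)$ is near zero --- a point you correctly flag and which is harmless since the iterates remain in the compact set $\mathcal{U}$ and the limit argmin is unique.
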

\begin{proof}
	The feasibility of \eqref{edp2} implies the feasibility of \eqref{edp2relmod}. Note that the dual problem \eqref{dualprob2} has the similar form to the optimization problem considered in \cite{Nedic2010}. {According to \cite[Proposition 4]{Nedic2010}, the convergence of $(\lambda_i(k),\bm{\xi}_i(k))$ to dual optimal point $(\thickbar{\lambda},\thickbar{\bm{\xi}})$ can be shown when the subgradient of $q_i$ is uniformly bounded.}
	Observe that $|x_i(k)|\leq x_i^{max}$ and $|u_i^{(j)}(k)|\leq u^{max}$ for all $k\geq 1$. Having this,  the subgradient of $q_i$ is uniformly bounded,  i.e., there exists scalar $\Gamma>0$ such that $|u_i(k)^2+d_i -x_i(k)|\leq \Gamma$ and  $|g_i^{(j)}(x_i(k),u_i(k))|\leq\Gamma$ for all $k\geq 1$, $i,j\in\{1,\ldots,n\}$. As a result,  $(\lambda_i(k),\bm{\xi}_i(k))\!\!\rightarrow\!\!(\thickbar{\lambda},\thickbar{\bm{\xi}})$ as $k\rightarrow \infty$.
	
	\noindent
	Since \eqref{kkt} holds for the primal-dual optimal solution of \eqref{edp2rel}, it also holds for the primal-dual optimal solution of \eqref{edp2relmod}. Given that $C_i$ is strongly convex and $\thickbar{\lambda}>0$,  $\mathcal{L}_i(x_i,u_i,\lambda_i,\bm{\xi}_i)$ 
	is a strongly convex function for fixed $\lambda_i=\thickbar{\lambda}$ and ${\bm{\xi}_i}=\thickbar{\bm{\xi}}$. In other words, there exists  $\beta_i>0$ satisfying 
	\begin{eqnarray}\label{sub2}
		&&\!\!\!\!\!\!\!\!\!\!\!\frac{\beta_i}{2}\left\|\mato{c}{\thickbar{x}_i\\\thickbar{u}_i}-\mato{c}{{x}_i(k+1)\\ u_i(k+1)}\right\|^2\\
		&&\!\!\!\!\!\!\!\!\!\!\leq\mathcal{L}_i(\thickbar{x}_i,\thickbar{u}_i,\thickbar{\lambda},\thickbar{\bm{\xi}})\!-\mathcal{L}_i(x_i(k\!+\!1),u_i(k\!+\!1),\thickbar{\lambda},\thickbar{\bm{\xi}})\nonumber\\\!\!\!\!\!\!&&\!\!\!\!\!\!-\nabla\mathcal{L}_i(x_i(k\!+\!1),u_i(k\!+\!1),\thickbar{\lambda},\thickbar{\bm{\xi}})^T\left(\mato{c}{\!\!\!\thickbar{x}_i\!\!\!\\\!\!\!\thickbar{u}_i\!\!\!}\!-\!\mato{c}{\!\!\!{x}_i(k\!+\!1)\!\!\!\\\!\!\! u_i(k\!+\!1)\!\!\!}\right)\!.~\nonumber
	\end{eqnarray}
	Given that $x_i(k+1)$ and $u_i(k+1)$ satisfy \eqref{alg_edp_prim},  observe that $\lim_{k\rightarrow\infty}(v_i(k),\bm{w}_i(k))=(\thickbar{\lambda},\thickbar{\bm{\xi}})$ implies $\nabla\mathcal{L}(x_i(k\!+\!1),u_i(k\!+\!1),\thickbar{\lambda},\thickbar{\bm{\xi}})\rightarrow\bm{0}$ as $k\rightarrow \infty$. With this and  \eqref{sub2} in mind, we have
	\begin{eqnarray}\label{uppereq}
		&&\!\!\!\!\!\!\!\!\!\!\lim\limits_{k\rightarrow\infty}\sum_{i\in \mathcal{V}}\frac{\beta_i}{2}\left\|\mato{c}{\thickbar{x}_i\\\thickbar{u}_i}-\mato{c}{{x}_i(k+1)\\ u_i(k+1)}\right\|^2\\
		&&\!\!\!\!\!\!\!\!\leq\lim\limits_{k\rightarrow\infty}\sum_{i\in \mathcal{V}}\left(\mathcal{L}_i(\thickbar{x}_i,\thickbar{u}_i,\thickbar{\lambda},\thickbar{\bm{\xi}})\!-\mathcal{L}_i(x_i(k\!+\!1),u_i(k\!+\!1),\thickbar{\lambda},\thickbar{\bm{\xi}})\right).\nonumber
	\end{eqnarray}
	\textbf{Claim 1:} The strong-duality of \eqref{edp2relmod} yields
	\begin{eqnarray}\label{sub1}
		\!\!\!\!\!\!\!\!\!\!\!\!&&\!\!\!\!\!\!\!\!\!\!\!\!\lim\limits_{k\rightarrow\infty}\!\sum_{i\in \mathcal{V}}\!\!\mathcal{L}_i(\thickbar{x}_i,\thickbar{u}_i,\thickbar{\lambda},\thickbar{\bm{\xi}})\!-\!\mathcal{L}_i(x_i(k\!+\!1),u_i(k\!+\!1),\thickbar{\lambda},\thickbar{\bm{\xi}})\!=\!0.
	\end{eqnarray}
	The proof of the claim is given in Appendix. In view of \eqref{uppereq} and \eqref{sub1}, $(x_i(k),u_i(k))\rightarrow(\thickbar{x}_i,\thickbar{u}_i)$ as $k\rightarrow \infty$.
\end{proof}
%%%%%%%%%%%%%%%%%%%%%%%%%%%%%%%%%%%%%%%%%%%%%%%%%%%%%%%%%%%%%%%%%%%%%%%%%%%%%%%%%%%%%%%%%%%%%%%

%%%%%%%%%%%%%%%%%%%%%%%%%%%%%%%%%%%%%%%%%%%%%%%%%%%%%%%%%%%%%%%%%%%%%%%%%%%%%%%%%%%%%%%%%%%%%%%
\section{Priority-considered Load Shedding}
In this section, at first, a convex optimization problem is proposed for addressing the desired load shedding result described in \textbf{LS Problem}. Then, a distributed algorithm for solving the proposed optimization problem is presented. 

%%%%%%%%%%%%%%%%%%%%%%%%%%%%%%%%%%%
\subsection{Proposed optimization problem}
	Let us assume that the total amount of load shedding $y_{tot}$ is known at least by the central operator, and that each $i$th agent has a knowledge on its average value
	$s_i$, i.e.,
	\begin{eqnarray*}
		s_i:= y_{tot}/n.
	\end{eqnarray*}
	Consider a new variable $z_i, i\in \mathcal{M},$ denoting a nonnegative slack variable which is used to maximize $y_i$ belonging to the set $\mathcal{M}$. Moreover, define
\begin{eqnarray*}
		~\mathcal{Z}_i:=\{z_i|0\leq z_i\leq n s_i\},\quad\mbox{and}\quad\mathcal{Y}_i:=\{y_i|0\leq y_i \leq y_i^{max}\}.
\end{eqnarray*}\noindent For all  $i\in\mathcal{M}_\ell,\ell=1,\ldots,m,$ consider the function $f_i(z_i,y_i)$ defined by
\begin{eqnarray*}
		f_i(z_i,y_i):=\kappa z_i^2+\left(y_i-\frac{ns_i}{p_i}\right)^2.
\end{eqnarray*}\noindent Recall that $p_i$ is the desired priority order of the $i$th bus, e.g., $p_i=1\Leftrightarrow i\in\mathcal{M}_1$.
Having this, the proposed optimization problem for the priority-considered load shedding is given as follows:	%
	\begin{subequations}\label{multi8}
		\begin{eqnarray}
			\min_{\substack{y_i,\forall i\in \mathcal{V}\\z_i,\forall i\in\mathcal{M}}}\!\!\!\!&&\!\!\!\!\!\sum\limits_{i\in\mathcal{V}\backslash\mathcal{M}}\!\!\!\!\!\left(D_i(y_i)-\mathsf{r}_iy_i\right)+\!\!\sum\limits_{i\in\mathcal{M}}\!\! f_i(z_i,y_i)~~~~\nonumber\\
			\!\!\!\!\!\!\!\!\!\!\!\!\!\!\!\!\!\!\mbox{subject to}&&
			\!\!\!\!\!\! ~y_i\in \mathcal{Y}_i,~ \forall i\in\mathcal{V},~z_i\in\mathcal{Z}_i,~\forall i\in\mathcal{M}\nonumber\\
			&&\!\!\!\!\!\!\!\!\!\mbox{if }m=0:\nonumber\\
			&&\!\!\sum_{i\in \mathcal{V}}s_i-\sum_{i\in \mathcal{V}}y_i=0,\label{sum_shed}\\
			&&\!\!\!\!\!\!\!\!\!\mbox{if }m\not=0:\nonumber\\
			&&\!\! \sum\limits_{i\in \mathcal{V}}s_i-\!\!\!\sum\limits_{i\in\mathcal{M}_1}\!\!\left(z_i\!+\!y_i\right)=0,\label{eqbal8}\\
			&&\!\!\!\!\!\! \sum\limits_{i\in\mathcal{M}_{\ell-1}}\!\!\!\!\!z_{i}\!-\!\!\!\sum\limits_{i\in\mathcal{M}_{\ell}}\!\!\!\left(z_i\!+\!y_i\right)\!=\!0,~   \!\forall \ell\!=\!2,\ldots,m,  \label{new8_or}\\
			&&\!\!\!\! \sum\limits_{i\in\mathcal{M}_{m}}\!\!\!\!z_{i}\!-\!\!\!\sum_{i\in \mathcal{V}\backslash\mathcal{M}}y_i=0,\label{const_last}
		\end{eqnarray}
	\end{subequations}
	where $\kappa\geq1$ is a tuning parameter. Note that the proposed optimization problem is convex and consists of $m+1$ equality constraints.
\subsubsection{Priority-considered load shedding through \eqref{multi8}}
The objective function consists of two terms. The first term $\sum_{i\in\mathcal{V}\backslash\mathcal{M}}D_i(y_i)-\mathsf{r}_iy_i$ aims to maximize the utility of the customers belonging to the set $\mathcal{V}\backslash\mathcal{M}$. Moreover, the following shows that $f_i(z_i,y_i)$ is used to make the optimal solution of \eqref{multi8} adhere to the desired priority list.
	\begin{enumerate}[label=(\roman*)]
		\item Substituting of \eqref{new8_or} and \eqref{const_last} into \eqref{eqbal8} yields \eqref{sum_shed}, i.e., $\sum_{i\in\mathcal{M}}{y_i}=y_{tot}$.
		\item With \eqref{eqbal8} in mind, minimizing $\kappa\sum_{i\in\mathcal{M}_1}\! z_i^2$ is equivalent to minimizing $\sum_{i\in \mathcal{V}}s_i-\sum_{i\in\mathcal{M}_1}\!y_i$. Hence, giving $\kappa>\nabla D_i(y_i)$ for all $i\in\mathcal{V}\backslash\mathcal{M}$ assigns more cost to $\kappa\sum_{i\in\mathcal{M}_1}\! z_i^2=\kappa \left(\sum_{i\in \mathcal{V}}s_i-\sum_{i\in\mathcal{M}_1}\!y_i\right)^2$ compared to $\sum_{i\in\mathcal{V}\backslash\mathcal{M}}D_i(y_i)$. This means sufficiently large $\kappa$ results in more load shedding on the buses belonging to the set $\mathcal{M}_1$ compared to other buses.
		\item Equality \eqref{new8_or} for $\ell=2$ can be written as $\sum_{i\in \mathcal{M}_2}z_i=\sum_{i\in \mathcal{M}_1}z_i-\sum_{i\in\mathcal{M}_2}\!y_i$. Substituting $\sum_{i\in \mathcal{M}_1}z_i$ with its definition given in \eqref{eqbal8} yields $\sum_{i\in \mathcal{M}_2}z_i=\sum_{i\in \mathcal{V}}s_i-\sum_{i\in\mathcal{M}_1\cup\mathcal{M}_2}\!y_i$. Hence, minimizing $\kappa\sum_{i\in\mathcal{M}_2}\! z_i^2$ with $\kappa>\nabla D_i(y_i)$ for all $i\in\mathcal{V}\backslash\mathcal{M}$ can be seen as an attempt to allocate more load shedding on the buses belonging to the set $\mathcal{M}_1\cup\mathcal{M}_2$. 
		\item Note that from the points (ii) and (iii),  $y_i,i\in\mathcal{M}_1,$ is prioritized through the minimization of $ \kappa\sum_{i\in\mathcal{M}_1}\! z_i^2$ and $ \kappa\sum_{i\in\mathcal{M}_2}\! z_i^2$. Meanwhile, $y_i,i\in\mathcal{M}_2,$ is prioritized through the minimization of $ \kappa\sum_{i\in\mathcal{M}_2}\! z_i^2$ only. In light of this, $y_i,i\in \mathcal{M}_1,$ is prioritized more than $y_i,i\in \mathcal{M}_2$.
		\item This reasoning can be extended to show that the constraint \eqref{new8_or} helps to generate $y_i$ according to its desired priority order $p_i$.
\end{enumerate}
In addition, the objective function $\left(y_i-{ns_i}/{p_i}\right)^2$ is used to make the term strongly convex. Hence, the optimal solution $\thickbar{y}_i,i\in\mathcal{M},$ can be obtained using a primal-dual approach (c.f. Theorem \ref{thm:ls}). By definition, the function minimizes the difference between $y_i$ and ${y_{tot}}/{\ell}$, $i\in\mathcal{M}_\ell,\ell=1,\ldots,m$.
\subsubsection{Feasibility of \eqref{multi8}}
The following describes the condition when \eqref{multi8} is feasible.
\begin{assumption}\label{asm:lsfeas}
	$\sum_{i\in \mathcal{V}}y_i^{max}>  \sum_{i\in \mathcal{V}}s_i=y_{tot}$.
\end{assumption}
Note that since the constraint set in \eqref{multi8} is compact and the objective function is continuous, the set of the optimal solution of \eqref{multi8} is
nonempty. Moreover, under Assumption \ref{asm:lsfeas}, there exists $\tilde{y}_1,\ldots,\tilde{y}_n$ that lie in the interior of $\mathcal{Y}_1\times\ldots\times \mathcal{Y}_n$ satisfying $\sum_{i\in\mathcal{V}}s_i-\tilde{y}_i=0$, meaning that the strong duality holds for \eqref{multi8}. 
%%%%%%%%%%%%%%%%%%%%%%%%%%%%%%%%%%%
	%%%%%%%%%%%%%%%%%%%%%%%%%%%%%%%%%%%
\subsection{Distributed priority-considered load shedding}
Here, we present an algorithm to solve \eqref{multi8} in a distributed manner
\subsubsection{Dual Decomposition}
Let $\bm{y}=[y_i]$, ${i\in\mathcal{V}}$, and $\bm{z}=[z_i]$, ${i\in\mathcal{M}}$, consider the Lagrangian function of \eqref{multi8} given as
\begin{eqnarray}
		\mathcal{L}^{ls}(\bm{y},\bm{z},\eta^{(1)},\ldots,\eta^{(m+1)})=\sum\limits_{i\in\mathcal{V}\backslash\mathcal{M}}\!\!\!\!\!\left(\!D_i(y_i)\!-\!\mathsf{r}_iy_i\!\right)\!+\!\!\!\sum\limits_{i\in\mathcal{M}}\!\! f_i(z_i,y_i)+\sum_{i\in\mathcal{V}}\sum_{\nu=1}^{m+1}\eta^{(\nu)}\mathsf{g}_i^{(\nu)}(z_i,y_i),
\end{eqnarray}\noindent where $y_i\in\mathcal{Y}_i$ for all ${i\in\mathcal{V}}$ and $z_i\in\mathcal{Z}_i$ for all ${i\in\mathcal{M}}$. Moreover, with a slight abuse of notation,
\begin{subequations}
	\begin{eqnarray}
		\mathsf{g}_i^{(m+1)}(z_i,y_i)&=& \left\{ \begin{array}{cl}
			s_i-y_i
			& m=0, \\z_i & m\not=0,i\in\mathcal{M}_{m},
			\\-y_i & m\not=0,i\in\mathcal{V}\backslash\mathcal{M},
			\\0&\mbox{otherwise},
		\end{array}\right.\\
		\mathsf{g}_i^{(1)}(z_i,y_i)&=& \left\{ \begin{array}{cl}
			s_i-y_i-z_i
			& i\in \mathcal{M}_1 \\s_i & \mbox{otherwise} .
		\end{array}\right. \label{def_1}
	\end{eqnarray}
\end{subequations}
For $\nu=2,\ldots,m$, $\mathsf{g}_i^{(\nu)}(y_i,z_i)$  is defined as follows
\begin{subequations}[resume]
	\begin{eqnarray}
		\mathsf{g}_i^{(\nu)}(z_i,y_i)&=& \left\{ \begin{array}{cl}
			-y_i-z_i 
			& i\in \mathcal{M}_\ell,\nu=\ell \\z_i &  i\in \mathcal{M}_{\ell},\nu=\ell+1\\0& \mbox{otherwise} ,
		\end{array}\right. \label{def_fell}
	\end{eqnarray}
\end{subequations}
where $\ell\in\{1,\ldots,m\}$. Denoting $\eta_i^{(\nu)}$ as the $i$th agent's estimate of $\eta^{(\nu)}$, and $\bm{\eta}_i=[\eta_i^{(\nu)}]$, $\nu=1,\ldots,m+1$, consider the following local Lagrangian function
\begin{eqnarray}
\mathcal{L}^{ls}_i(y_i,z_i,\bm{\eta}_i):=F_i(y_i,z_i)+\sum_{\nu=1}^{m+1}\eta_i^{(\nu)}\mathsf{g}_i^{(\nu)}(z_i,y_i)
\end{eqnarray}
where, with a slight abuse of notation, $F_i(z_i,y_i)=D_i(y_i)-\mathsf{r}_iy_i$ for  $i\in\mathcal{V}\backslash\mathcal{M}$, and $F_i(z_i,y_i)=f_i(z_i,y_i)$ for $i\in\mathcal{M}$. Note that $\sum_{i\in\mathcal{V}}\mathcal{L}_i^{ls}=\mathcal{L}^{ls}$ if and only if  $\bm{\eta}_i=\bm{\eta}_j$ holds for all $i,j\in\mathcal{V}$.
Thus, the dual problem of \eqref{multi8} can be formulated as follows
	\begin{eqnarray}\label{dualprobp}
		\max\limits_{\substack{\bm{\eta}_i\\i=1,\ldots,n}}&& \sum_{i\in\mathcal{V}}h_i^{ls}(\bm{\eta}_i)\\
		\mbox{subject to }&&\bm{\eta}_i=\bm{\eta}_j,~~\forall i,j\in\mathcal{V}\nonumber
\end{eqnarray}\noindent where $h_i^{ls}(\bm{\eta}_i):=\inf_{y_i\in\mathcal{Y}_i,z_i\in\mathcal{Z}_i}\mathcal{L}_i^{ls}(y_i,z_i,\bm{\eta}_i)$. As in the previous section, a method based on the distributed sub-gradient approach is used to solve \eqref{dualprobp} in a distributed manner.
\subsubsection{Distributed algorithm}
Let us introduce $\phi_i^{(\nu)},\nu=1,\ldots,m+1$, $i\in\mathcal{V}$. Denoting $\bm{\eta}_i=[\eta_i^{(\nu)}]$ and $\bm{\phi}_i=[\phi_i^{(\nu)}],\nu=1,\ldots,m+1$, consider the following sequences for solving \eqref{multi8}
\begin{subequations}\label{alg_ls}
	\begin{eqnarray}
			[z_i(k+1),y_i(k+1)]=\inf_{y_i\in\mathcal{Y}_i,z_i\in\mathcal{Z}_i}\mathcal{L}_i^{ls}(y_i,z_i,\bm{\phi}_i(k)),\label{alg_ls0}
	\end{eqnarray}
\end{subequations}
where
\begin{subequations}[resume]
	\begin{eqnarray}
			\bm{\phi}_i(k+1)\!\!&=&\!\!\sum_{j\in\mathcal{V}}^{}a_{ij}\bm{\eta}_j(k+1),\label{alg_ls1}\\
			\bm{\eta}_i(k+1)\!\!&=&\!\!\bm{\phi}_i(k)+\alpha(k)\bm{\mathsf{g}}_i(z_i(k\!+\!1),y_i(k\!+\!1)),\label{alg_ls2}
	\end{eqnarray}
\end{subequations}
and $\bm{\mathsf{g}}_i(z_i,y_i)=[\mathsf{g}_i^{(1)}(z_i,y_i),\ldots,\mathsf{g}_i^{(m+1)}(z_i,y_i)]^T$. The proposed algorithm \eqref{alg_ls} can be implemented in a distributed manner under the assumption that $m$ is known to all agents. 
\subsubsection{Convergence Analysis}The following shows the convergence of \eqref{alg_ls}.
	\begin{thm}\label{thm:ls}
		Suppose that Assumptions  \ref{asm:graph}, \ref{asm:conv2} and \ref{asm:lsfeas} hold, and that $\alpha(k)$ satisfies $\sum_{k=0}^{\infty}\alpha(k)=\infty$ and $\sum_{k=0}^{\infty}\alpha^2(k)<\infty$. Then, the sequences \eqref{alg_ls}  solves optimization problem \eqref{multi8}, i.e., $\bm{\eta}_i(k)\rightarrow\thickbar{\bm{\eta}}$ and $(z_i(k),y_i(k))\rightarrow(\thickbar{z}_i,\thickbar{y}_i)$, as $k\rightarrow \infty$ for all $i\in\mathcal{V}$.
\end{thm}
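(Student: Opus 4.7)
The plan is to mirror the two-stage strategy that already worked for Theorem \ref{thm:edp}: first establish convergence of the dual iterates $\bm{\eta}_i(k)$ to a dual optimum via the distributed subgradient framework of \cite{Nedic2010}, and then lift dual convergence to primal convergence by exploiting strong convexity of the local Lagrangians $\mathcal{L}_i^{ls}$. Assumption \ref{asm:lsfeas} ensures Slater's condition (each $\tilde{y}_i$ can be picked in $\mbox{relint}(\mathcal{Y}_i)$ with the balance $\sum_i s_i = \sum_i \tilde{y}_i$, and the auxiliary $\tilde{z}_i$ can be chosen in $\mbox{relint}(\mathcal{Z}_i)$ so that the cascaded equalities \eqref{eqbal8}--\eqref{const_last} are satisfied strictly in the $y_i$-components), so strong duality for \eqref{multi8} holds and a dual optimum $\thickbar{\bm{\eta}}$ exists.

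For the dual step, I would verify the hypotheses of \cite[Proposition 4]{Nedic2010} applied to \eqref{dualprobp}. Because $\mathcal{Y}_i$ and $\mathcal{Z}_i$ are compact, the primal iterates satisfy $|y_i(k)|\leq y_i^{max}$ and $|z_i(k)|\leq n s_i$ for all $k\geq 1$. Reading off the definitions of $\mathsf{g}_i^{(\nu)}$, each component is either a constant ($s_i$) or an affine function of the bounded quantities $y_i$ and $z_i$; consequently there exists $\Gamma^{ls}>0$ such that $|\mathsf{g}_i^{(\nu)}(z_i(k),y_i(k))|\leq \Gamma^{ls}$ for all $i\in\mathcal{V}$, $\nu\in\{1,\ldots,m+1\}$, and $k\geq 1$, which is precisely the uniform subgradient-boundedness condition. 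Together with Assumption \ref{asm:graph} (doubly stochastic $A$ over a connected graph) and the step-size assumption $\sum\alpha(k)=\infty$, $\sum\alpha^2(k)<\infty$, this yields $\bm{\eta}_i(k)\to\thickbar{\bm{\eta}}$ for every $i\in\mathcal{V}$, and hence by \eqref{alg_ls1}, $\bm{\phi}_i(k)\to\thickbar{\bm{\eta}}$ as well.

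For the primal step, I would show that at the dual optimum the local Lagrangian $\mathcal{L}_i^{ls}(y_i,z_i,\thickbar{\bm{\eta}})$ is strongly convex in the relevant variables. For $i\in\mathcal{M}$ this is immediate: $f_i(z_i,y_i)=\kappa z_i^2+(y_i-ns_i/p_i)^2$ is strongly convex in $(z_i,y_i)$ and the Lagrangian terms are linear. For $i\in\mathcal{V}\setminus\mathcal{M}$ the variable $z_i$ does not appear (inspection of the definitions of $\mathsf{g}_i^{(\nu)}$ confirms this), and strong convexity in $y_i$ follows from Assumption \ref{asm:conv2}. Writing the strong-convexity inequality as in \eqref{sub2}, evaluating at the iterates $(z_i(k+1),y_i(k+1))$ produced by \eqref{alg_ls0}, and using that $\nabla\mathcal{L}_i^{ls}(\cdot,\cdot,\thickbar{\bm{\eta}})\to \bm{0}$ (because $\bm{\phi}_i(k)\to\thickbar{\bm{\eta}}$ and the infimum in \eqref{alg_ls0} is attained in the interior of the parameter space or satisfies the standard KKT envelope), I would then sum over $i$ and invoke strong duality to conclude, exactly as in Claim 1 of Theorem \ref{thm:edp}, that
\begin{equation*}
\lim_{k\to\infty}\sum_{i\in\mathcal{V}}\Bigl(\mathcal{L}_i^{ls}(\thickbar{y}_i,\thickbar{z}_i,\thickbar{\bm{\eta}})-\mathcal{L}_i^{ls}(y_i(k+1),z_i(k+1),\thickbar{\bm{\eta}})\Bigr)=0.
\end{equation*}
This would squeeze $\|(y_i(k),z_i(k))-(\thickbar{y}_i,\thickbar{z}_i)\|\to 0$.

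The most delicate part is the second step. Unlike in Theorem \ref{thm:edp}, where $\thickbar{\lambda}>0$ was explicitly argued to make the quadratic term in $u_i$ effective, here the strong convexity of $\mathcal{L}_i^{ls}$ in $(z_i,y_i)$ must be read off directly from $f_i$ and $D_i$; some care is needed because the constraints \eqref{eqbal8}--\eqref{const_last} couple $z_i$'s across different $\mathcal{M}_\ell$ and because buses in $\mathcal{V}\setminus\mathcal{M}$ contribute no $z_i$ variable, so the aggregate strong convexity modulus $\beta_i$ must be chosen componentwise. The remaining ingredients (the strong-duality identity replacing Claim 1 and the envelope argument that forces $\nabla \mathcal{L}_i^{ls}\to \bm{0}$ along the iterates) are straightforward adaptations of their counterparts in the proof of Theorem \ref{thm:edp}.
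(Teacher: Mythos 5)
Your proposal follows essentially the same two-stage argument as the paper's proof in the Appendix: dual convergence of $\bm{\eta}_i(k)$ via uniform boundedness of $\bm{\mathsf{g}}_i$ on the compact sets $\mathcal{Y}_i\times\mathcal{Z}_i$ within the distributed subgradient framework, followed by primal convergence from strong convexity of $\mathcal{L}_i^{ls}$ (which, as you correctly note, here comes directly from $f_i$ and Assumption \ref{asm:conv2} rather than from a positive multiplier) combined with a Claim-1-type strong-duality bound. The only cosmetic difference is that the paper evaluates the strong-convexity inequality at $\bm{\phi}_i(k)$, where $\nabla\mathcal{L}_i^{ls}(y_i(k+1),z_i(k+1),\bm{\phi}_i(k))=0$ holds exactly by the minimization step \eqref{alg_ls0}, rather than at $\thickbar{\bm{\eta}}$ with a limiting gradient argument; both routes close the proof.
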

See Appendix for the proof. When the desired priority list is changed, note that \eqref{alg_ls} requires only one tuning parameter $\kappa > \nabla D_i(y_i)$ for all $y_i\in\mathcal{Y}_i,i\in\mathcal{V}\backslash\mathcal{M}$, under assumption that the value of $m$ and $s_i$ are informed to all agents at every initial step of \eqref{alg_ls}. Given this, a question arises on developing a method such that $m$ and $s_i$ can be obtained by exchanging information between the neighbor agents; thus, the communication between the central controller and the agents can be reduced. The algorithm is discussed in the next section.
%%%%%%%%%%%%%%%%%%%%%%%%%%%%%%%%%%%%%%%%%%%%%%%%%%%%%%%%%%%%%%%%%%%%%%%%%%%%%%%%%%%%%%%%%%%%%%%

%%%%%%%%%%%%%%%%%%%%%%%%%%%%%%%%%%%%%%%%%%%%%%%%%%%%%%%%%%%%%%%%%%%%%%%%%%%%%%%%%%%%%%%%%%%%%%%
\section{DISTRIBUTED ENERGY MANAGEMENT : EDP + LOAD SHEDDING}
	This section presents how to integrate the proposed algorithms given in the previous sections. To this end, algorithms used to generate the value of  $y_{tot}$ and $m$ in a distributed manner are proposed. 

%%%%%%%%%%%%%%%%%%%%%%%%%%%%%%%%%%%
\subsection{Solving an overloading condition in the EDP}
In Section III, it is assumed that the EDP is feasible. However, there can be a case when there is no $\tilde{x}_i\in\mathcal{X}_i$ such that $\Upsilon(\tilde{\bm{x}})+\sum_{i\in \mathcal{V}}d_i -\sum_{i\in \mathcal{V}}\tilde{x}_i=0$ holds, meaning that an overloading condition occurs. To handle this, this section aims to discuss how to obtain $\sum_{i\in\mathcal{V}}s_i=y_{tot}$ in a distributed manner so that the feasibility of the EDP can be recovered.  
\subsubsection{A feasibility problem}
To find the necessary load shedding $y_{tot}=\sum_{i\in\mathcal{V}}s_i$, consider the following feasibility problem 
	\begin{eqnarray*}
		\begin{array}{cc}\mbox{\textbf{P1:}}~~~~~~&\begin{array}{c}
				\!\!\!\!\!\!\!\!\!\!\!\!\!\!\!\!\!\!\!\!\!\!\!\!\!\!\!\!\!\!\!\!\!\!\!\!\!\!\!\!\!\!\!\!\!\!\!\!\!\!\!\min\limits_{\substack{s_i\geq0,\bm{x}=[x_i],\\x_i\in\mathcal{X}_i,i\in\mathcal{V}}}~~ ~~~~~~ \sum_{i\in \mathcal{V}}^{}s_i^2\\
				\mbox{subject to~~~}\Upsilon(\bm{x})+\sum\limits_{i\in \mathcal{V}}\left(d_i -x_i-s_i\right)=0.
			\end{array}
		\end{array}
	\end{eqnarray*}
	Let us denote $s_i^*$ and $x_i^*$ as the optimal solution of this problem. Note that ${s}_i^*={s}_j^*$, $i,j\in\mathcal{V}$. Moreover, the problem yields the optimal sum of load shedding $y_{tot}=\sum_{i\in\mathcal{V}}{s}_i^*$ because $s_i^*$ is the smallest value such that $\sum_{i\in \mathcal{V}}{s}_i^*=\Upsilon(\bm{x}^*)+\sum_{i\in \mathcal{V}}\left(d_i -{x}_i^*\right)$ holds, $\bm{x}^*=[x_i^*], i\in\mathcal{V}$. 
\subsubsection{Relaxation of feasibility problem \textbf{P1}}
Consider the relaxed problem of \textbf{P1} defined as
\begin{eqnarray*}
	\begin{array}{cc}\mbox{\textbf{P2:}}~~&\begin{array}{c}
			\!\!\!\!\!\!\min\limits_{\substack{s_i\geq 0,\bm{u}=[u_i],\bm{x}=[x_i],\\x_i\in\mathcal{X}_i,i\in\mathcal{V}}}~~~~~ \sum_{i\in \mathcal{V}}^{}s_i^2+\tau\sum_{i\in \mathcal{V}}^{}x_i^2,\\
			\mbox{subject to~~~}\!\!\!\sum\limits_{i\in \mathcal{V}}\left(u_i^2+d_i -x_i-s_i\right)\leq\!0,\\
			R\bm{x}=\bm{u},
		\end{array}
	\end{array}
\end{eqnarray*}
where $\tau>0$ is a  tuning parameter.  Denoting $\thickbar{x}_i,\thickbar{u}_i$ and $\thickbar{s}_i$ as the optimal solution of this problem, it follows that $\thickbar{s}_i=\thickbar{s}_j$, $i,j\in\mathcal{V}$. 

{Note that \textbf{P2} is convex as the equality constraint in \textbf{P1} is relaxed into an inequality. Moreover, the term $\tau\sum_{i\in \mathcal{V}}^{}x_i^2$ is included as an additional objective function with $\tau>0$. Having this, the objective function of \textbf{P2} is monotonically increasing on the set $\{x_i,s_i|x_i\in\mbox{relint}(\mathcal{X}_i),s_i>0\}$}. Hence, Theorem \ref{thm:relax} can be re-established to show that the optimal solution of this problem meets equality $\sum_{i\in \mathcal{V}}\left(\thickbar{u}_i^2+d_i -\thickbar{x}_i-\thickbar{s}_i\right)=0$, i.e., $\sum_{i\in \mathcal{V}}\thickbar{s}_i=\Upsilon(\thickbar{\bm{x}}_i)+\sum_{i\in \mathcal{V}}\left(d_i -\thickbar{x}_i\right)$ where $\thickbar{\bm{x}}_i=[\thickbar{x}_i],i\in\mathcal{V}$. However, due to the additional term $\tau\sum_{i\in \mathcal{V}}^{}x_i^2$, we have $\sum_{i\in\mathcal{V}}\thickbar{s}_i\geq\sum_{i\in\mathcal{V}}s_i^*$, meaning that the optimal solution of \textbf{P2} is not the optimal solution of \textbf{P1}. Nevertheless, small value of $\|s_i^*-\thickbar{s}_i\|$ can be obtained by setting $\tau>0$ sufficiently small.  With some minor modifications, the distributed algorithm given in \eqref{alg_edp} can be adopted for obtaining $\thickbar{s}_i$ using only local information. The details are omitted due to page limit.
%%%%%%%%%%%%%%%%%%%%%%%%%%%%%%%%%%%

%%%%%%%%%%%%%%%%%%%%%%%%%%%%%%%%%%%
\subsection{Supporting algorithm for obtaining $m$}
\subsubsection{Review of a consensus algorithm}
At first, a consensus algorithm studied in \cite{Olshevsky2017} is reviewed. 
	Suppose that each $i$th agent has two local variables $\theta_i,\omega_i\in \mathbb{R}$ and applies a consensus algorithm given by
	\begin{subequations}\label{alg_est_m}
		\begin{eqnarray}
			\!\!\!\!\!\!	\!\!\!\theta_i(k+1)\!\!\!&=&\!\!\!\omega_i(k)+\frac{1}{2}\sum_{j\in\mathcal{N}_i} \frac{\omega_j(k)-\omega_i(k)}{\max\{|\mathcal{N}|_i,|\mathcal{N}_j|\}},\\
			\!\!\!	\!\!\!\!\!\!\omega_i(k+1)\!\!\!&=&\!\!\!\theta_i(k)\!+\!\left(\!1\!-\!\frac{2}{9n+1}\!\right)\!\!(\theta_i(k+1)-\theta_i(k)).
		\end{eqnarray}
	\end{subequations}
	Then, the following lemma holds.
\begin{lem}\cite[Theorem 2.1]{Olshevsky2017}\label{lem:linear_cons}
	Suppose that Assumption \ref{asm:graph} holds and that each node in an undirected graph $\mathcal{G}$ implements \eqref{alg_est_m} with $\omega_i(0)=\theta_i(0)$. Then, $\lim_{k\rightarrow\infty}\theta_i(k)=\frac{1}{n}\sum_{i=1}^{n}\omega_i(0)=:\theta_{ss}$ and 
	\begin{eqnarray*}
		\|\bm{\theta}(k)-\theta_{ss}\bm{1}\|_2^2\leq 2\left(1-\frac{1}{9n}\right)^{k-1}\|\bm{\theta}(0)-\theta_{ss}\bm{1}\|_2^2
	\end{eqnarray*}
	where $\bm{\theta}=[\theta_i],i=1,\ldots,n.$
\end{lem}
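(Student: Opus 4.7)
The plan is to invoke the analysis in \cite{Olshevsky2017} directly, since the update \eqref{alg_est_m} is precisely the accelerated consensus scheme studied there; the standing hypotheses ($\mathcal{G}$ undirected and connected, together with $\omega_i(0)=\theta_i(0)$) coincide with the prerequisites of \cite[Theorem 2.1]{Olshevsky2017}. For completeness I would sketch the underlying argument in three steps.

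First, I would recast the averaging step in matrix form as $\bm{\theta}(k+1)=W\bm{\omega}(k)$, where $W=[W_{ij}]$ is the Metropolis--Hastings matrix with $W_{ij}=\tfrac{1}{2\max\{|\mathcal{N}_i|,|\mathcal{N}_j|\}}$ for $\{i,j\}\in\mathcal{E}$ and diagonal entries chosen so that each row sums to one. By construction $W=W^{T}$ and $W\bm{1}=\bm{1}$, so $W$ is symmetric and doubly stochastic; under Assumption \ref{asm:graph}, the Perron--Frobenius theorem gives that $\lambda=1$ is a simple eigenvalue while every other eigenvalue lies strictly inside $(-1,1)$. The full iteration then becomes the coupled linear recurrence
\begin{equation*}
\bm{\theta}(k+1)=W\bm{\omega}(k),\qquad \bm{\omega}(k+1)=\bm{\theta}(k+1)+\gamma\bigl(\bm{\theta}(k+1)-\bm{\theta}(k)\bigr),
\end{equation*}
with momentum parameter $\gamma=1-\tfrac{2}{9n+1}$.

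Second, I would project onto the subspace $\bm{1}^{\perp}$. Double stochasticity of $W$ combined with the initialization $\bm{\omega}(0)=\bm{\theta}(0)$ implies $\tfrac{1}{n}\bm{1}^{T}\bm{\theta}(k)\equiv \tfrac{1}{n}\sum_{i}\omega_i(0)=\theta_{ss}$ for all $k$, which identifies the consensus limit. Setting $\bm{e}_\theta(k):=\bm{\theta}(k)-\theta_{ss}\bm{1}$ and $\bm{e}_\omega(k):=\bm{\omega}(k)-\theta_{ss}\bm{1}$, the same coupled recurrence governs $(\bm{e}_\theta,\bm{e}_\omega)$ but with $W$ restricted to $\bm{1}^{\perp}$, whose spectrum is contained in $[-1,\lambda_2(W)]$ with $\lambda_2(W)<1$.

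Third, I would diagonalize $W\big|_{\bm{1}^{\perp}}$ and decouple the system into independent scalar second-order recurrences of the form $e(k+1)=\mu\bigl(e(k)+\gamma(e(k)-e(k-1))\bigr)$, one for each eigenvalue $\mu$ of $W$ with $|\mu|<1$. A Chebyshev/Nesterov-style root analysis of the associated characteristic polynomial yields a contraction rate of order $\sqrt{1-|\mu|}$; combined with the classical Metropolis spectral-gap bound $1-\lambda_2(W)=\Omega(1/n^{2})$ for connected graphs, the tuning $\gamma=1-2/(9n+1)$ turns this into the explicit per-mode contraction factor $1-1/(9n)$. Summing over eigen-modes (by Parseval) and using $\bm{\omega}(0)=\bm{\theta}(0)$ delivers the stated bound, with the leading factor $2$ absorbing the one-step transient between $\bm{\theta}(k)$ and $\bm{\theta}(k-1)$ that arises because the two-term recurrence is initialized from rest.

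The main obstacle is this final acceleration step: the Metropolis spectral-gap estimate is classical, but converting momentum together with a gap of order $1/n^{2}$ into the explicit constant $1/(9n)$ (rather than an abstract $\Omega(1/n)$) requires the careful scalar-recurrence analysis that is carried out in detail in \cite{Olshevsky2017}, which is why we invoke that reference rather than reproduce the full derivation.
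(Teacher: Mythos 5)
Your proposal is correct and takes essentially the same route as the paper: the paper gives no proof of this lemma at all, simply quoting it as \cite[Theorem 2.1]{Olshevsky2017}, which is precisely your opening move, and your three-step sketch (lazy Metropolis matrix $W$, conservation of the mean on $\bm{1}^{\perp}$, and the Nesterov/Chebyshev acceleration converting the $\Omega(1/n^{2})$ spectral gap into the explicit $1-1/(9n)$ contraction) is a faithful summary of what is inside that reference. One caution: your matrix recasting uses $\bm{\omega}(k+1)=\bm{\theta}(k+1)+\gamma\bigl(\bm{\theta}(k+1)-\bm{\theta}(k)\bigr)$, which is Olshevsky's actual extrapolation step, whereas \eqref{alg_est_m} as printed reads $\omega_i(k+1)=\theta_i(k)+\gamma\bigl(\theta_i(k+1)-\theta_i(k)\bigr)$ (a convex combination rather than momentum); this is evidently a typo in the paper, and your version is the one to which the cited theorem applies.
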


Lemma \ref{lem:linear_cons} implies that \eqref{alg_est_m} converges to the average of the initial values as $k\rightarrow\infty$. Moreover, $\|\bm{\theta}(k)-\theta_{ss}\bm{1}\|_{\infty}\leq \epsilon$ after $O(n \ln(\|\bm{\theta}(0)-\theta_{ss}\bm{1}\|/\epsilon))$ iterations. The following section describes the approach based on \eqref{alg_est_m} for obtaining the value of $m$ in a distributed manner.

\subsubsection{Applying consensus algorithm \eqref{alg_est_m} to generate $m$}
Let us consider the following value for the initial condition of \eqref{alg_est_m}
	\begin{eqnarray}\label{eq_initial}
		\omega_i(0)&=& \left\{ \begin{array}{cl}
			{np_i}/{|M_\ell|}
			& i\in\mathcal{M}_{\ell},\ell\in[1,m],\\
			{n}/{|\mathcal{V}\backslash\mathcal{M}|} &  i\in\mathcal{V}\backslash\mathcal{M},
		\end{array}\right.
	\end{eqnarray}
	with $\theta_i(0)=\omega_i(0)$. Then, $\lim_{k\rightarrow\infty}\theta_i(k)=\theta_{ss}=1+\sum_{\ell=1}^{m}\ell=(2+{m(m+1)})/{2}$, $i\in\mathcal{V}$.
	It is easy to see that $\theta_{ss}$ is always unique for any $m\in[0,n]$. For example, $\theta_{ss}=1$ when $m=0$, $\theta_{ss}=2$ when $m=1$, and so on. Since $m$ is upper-bounded by $n$, the consensus value $\theta_{ss}$ can be decoded to obtain the value of $m$.

In practice, it may be hard to obtain $\theta_i=\theta_{ss}$ since it holds asymptotically. Thus, we may need to use $\theta_i(k)$ instead of $\theta_{ss}$. From Lemma \ref{lem:linear_cons}, $\theta_i(k)=\theta_{ss}\pm \epsilon$ for all $k\geq K_n(\epsilon)$ where
$K_n(\epsilon):=O(n \ln(\|\bm{\theta}(0)-\theta_{ss}\bm{1}\|/\epsilon))$. The following describes the value of $\theta_i(k)\!\!=\!\theta_{ss}\!\pm\!\epsilon$ for given different values of $m$.
\begin{table}[H]
	\caption{$\theta_i(k)\!\!=\!\theta_{ss}\!\pm\!\epsilon$ for given different value of $m$.}\label{tab:encode}
	\begin{center}
		\begin{tabular}{ | m{2.6cm} | m{0.6cm}| m{0.6cm} | m{0.45cm} |m{2.6cm} | } 
			\hline
			$\theta_i(k)=\theta_{ss}\pm\epsilon $&$1\pm \epsilon$  &$2\pm \epsilon$   &$\ldots$& $(2+n^2+n\pm 2\epsilon)/{2}$  \\
			\hline
			$m$&$~~$0$~~$ & $~~$1$~~$&$\ldots$&$~~~~~~~~~n~~~~~~~~$\\
			\hline
		\end{tabular}
	\end{center}                
\end{table}
\noindent
In view of Table \ref{tab:encode}, choosing $\epsilon$ satisfying $0<\epsilon<0.5$ yields an unique set $\theta_i(k)\in[\theta_{ss}-\epsilon,\theta_{ss}+\epsilon]$ for any $m$, $m\in[0,n]$. Having this, 
the value of $m$ can be still attained by running the algorithm \eqref{alg_est_m} for at least $K_n(\epsilon)$ iterations where $\epsilon\in(0,0.5)$.
\begin{rem}
	The proposed algorithm \eqref{alg_est_m} can be implemented under the assumption that the $i$th agent has knowledge of the cardinality of the set $\mathcal{M}_\ell$ (if $i\in \mathcal{M}_\ell$) or $\mathcal{V}\backslash\mathcal{M}$ (if $i\in \mathcal{V}\backslash\mathcal{M}$), $\ell\in\{1,\ldots,m\}$. In practice, this information can be obtained from the central controller. Note that when the desired priority list is changed, then, the central controller broadcast the new value of $p_i$ and $|\mathcal{M}_\ell|$ only to the agents that are affected by the modification. In light of this, the method requires less communication between the agents and the central controller, compared to the case when $m$ is broadcast to all agents.
		Moreover, from the initial condition \eqref{eq_initial}, the information regarding the choice of the load shedding program of the $i$th customer is not disclosed to other agents, meaning that the privacy is preserved. 
\end{rem}
%%%%%%%%%%%%%%%%%%%%%%%%%%%%%%%%%%%

%%%%%%%%%%%%%%%%%%%%%%%%%%%%%%%%%%%
\subsection{Proposed distributed energy management}
The integration of the proposed EDP and the load shedding method is formally written in \textbf{Algorithm 1}. Bear in mind that the required demand for the EDP is updated in \textbf{Algorithm 1} by considering the load shedding, i.e., $d_i\leftarrow d_i-s_i$ in Stage 2. 
%%%%%%%%%%%%%%%%%%%%%%%%%%%%%%%%%%%%%%%%%%%%5
\begin{algorithm}
	\small
	\caption{Distributed EDP + load shedding}
	\label{alg1}
		\vspace{1mm}
		\textbf{Input from the operator:} $p_i$, $|\mathcal{M}_{\ell}|$ if $i\in\mathcal{M}_{\ell}$, $|\mathcal{V}\backslash\mathcal{M}|$ otherwise\\
		\textbf{Stage 1a : Finding $m$ }\hrulefill
		\begin{algorithmic}
			\State Initialize $\theta_{i}(0)=\omega_i(0)$ using \eqref{eq_initial}
			\State Run \eqref{alg_est_m} for $k\geq K_n(\epsilon)$
			\State \textbf{Output} : All agent know $m$ by using Table \ref{tab:encode}
		\end{algorithmic}
		\textbf{Stage 1b : Estimating $y_{tot}$ }\hrulefill
		\begin{algorithmic}
			\State Solve \textbf{P2} in a distributed manner
			\State \textbf{Output} : Each agent holds $s_i$
		\end{algorithmic}
		\textbf{Stage 2: Determining $x_i,y_i$}\hrulefill
		\begin{algorithmic}
			\State \textbf{Input} :  $m,s_i$
			\State Set $d_i\leftarrow d_i-s_i$
			\State Initialize $\lambda_i(0)\!\geq\! 0,\bm{\xi}_i(0)\in\mathbb{R}^{n}$, $\bm{\eta}_i(0)\in\mathbb{R}^{m+1}$
			\Repeat {$~~~k=0,1,\ldots$} 
			\State Exchange $\lambda_i(k),\bm{\xi}_i(k),\bm{\eta}_i(k)$ to neighboring agents. 
			\State Run \eqref{alg_edp0}-\eqref{alg_edp1} and \eqref{alg_ls1}
			\State Calculates $x_i(k+1)$ and $y_i(k+1)$ using \eqref{alg_edp_prim} and \eqref{alg_ls0}
			\State Update $\lambda_i(k+1),\bm{\xi}_i(k+1),\bm{\eta}_i(k)$ using \eqref{alg_edp3}-\eqref{alg_edp4} and \eqref{alg_ls2}
			\Until{converge}
	\end{algorithmic}
\end{algorithm} 
In the next section, numerical simulations are used to demonstrate that the proposed algorithms yield the desired output of the EDP and the priority-considered load shedding.
%%%%%%%%%%%%%%%%%%%%%%%%%%%%%%%%%%%%%%%%%%%%%%%%%%%%%%%%%%%%%%%%%%%%%%%%%%%%%%%%%%%%%%%%%%%%%%%

%%%%%%%%%%%%%%%%%%%%%%%%%%%%%%%%%%%%%%%%%%%%%%%%%%%%%%%%%%%%%%%%%%%%%%%%%%%%%%%%%%%%%%%%%%%%%%%
\section{Numerical Simulations}
Note that \textbf{Stage 2} of \textbf{Algorithm 1} plays an essential role in determining the dispatched power and load shedding at each bus. Moreover, \textbf{Stages} \textbf{1a} and \textbf{1b} can be seen as two supporting algorithms.  Having this, due to page limit, this section only presents the simulation results for the algorithms under \textbf{Stage 2} of \textbf{Algorithm 1}, i.e., the methods given in \eqref{alg_edp} and \eqref{alg_ls} are validated. 
For this purpose, consider the  IEEE-30 bus \cite{Wang2008}. The system consists of 6 generators and 24 load buses. Furthermore, it is assumed that the communication between the buses is modeled as an undirected graph $\mathcal{G}=(\mathcal{V},\mathcal{E})$, $\mathcal{V}=\{1,\ldots,30\}$ where the communication paths do not necessarily coincide with the power flow.
\subsection{The proposed EDP considering transmission losses}
\begin{table}
	\begin{center}
		\caption{Generators parameters ($MU$=money unit).}\label{tab:par}
		\begin{tabular}{||c c c c||} 
			\hline
			Generator & $\mathsf{a}_i$ $(MU/MW^2
			)$ & $\mathsf{b}_i$ $(MU/MW
			)$  & $x_i^{max}(MW)$ \\ [0.5ex] 
			\hline\hline
			1 & 0.08 & 2 & 20 \\ 
			\hline
			2 & 0.06 & 3 & 10 \\
			\hline
			3 & 0.07 & 4 & 30 \\
			\hline
			4 & 0.06 & 4 & 15 \\
			\hline
			5 & 0.08 & 2.5 & 10\\
			\hline
			6 & 0.08 & 2.5 & 8 \\ 
			\hline
		\end{tabular}
	\end{center}
\end{table}
The simulation results of the proposed distributed EDP given in \eqref{alg_edp} is presented in this section. Table \ref{tab:par} describes $x_i^{max}$ and the parameters of the cost function of the $i$th generator with $C_i(x_i)=\mathsf{a}_ix_i^2+\mathsf{b}_ix_i$. The minimum generation $x_i^{min}$ for all generators are 5MW. The $B$-matrix is obtained from \cite{Wang2008} and is given as follows:
\begin{eqnarray*}
	B\!=\!10^{-2}\!\!\mato{cccccc}{\!\!\!13.82\!&\!-2.99\!&\!0.44\!&\!-0.22\!&\!-0.10\!&\!-0.08\!\!\!\\
		\!\!\!-2.99\!&\!4.87\!&\!-0.25\!&\!0.04\!&\!0.16\! &\!0.41\!\\
		\!\!\!0.44\!&\!-0.25\!&\!1.82\!&\!-0.70\!&\!-0.66\!&\!-0.66\!\\
		\!\!\!-0.22\!&\!0.04\!&\!-0.70\!&\!1.37\!&\!0.50\!&\!0.33\!\\
		\!\!\!-0.10\!&\!0.16\!&\!-0.66\!&\!0.50\!&\!1.09\!&\!0.05\!\\
		\!\!\!-0.08\!&\!0.41\!&\!-0.66\!&\!0.33\!&\!0.05\!&\!2.44\!}
\end{eqnarray*}
At first, the proposed relaxed problem \eqref{edp2relmod} is solved in a centralized way to show that the optimal solution $\thickbar{x}_i$ yields $\Upsilon(\thickbar{\bm{x}})+\sum_{i\in\mathcal{V}}d_i-\sum_{i\in\mathcal{V}}\thickbar{x}_i=0$, $\thickbar{\bm{x}}=[\thickbar{x}_i]$. The results are depicted in Table \ref{tab:centr}. This implied that the supply-demand balance is attained using the relaxed-problem \eqref{edp2relmod}. 
\begin{table}
	\begin{center}
		\caption{Optimal solution of \eqref{edp2relmod} obtained using a centralized method. }\label{tab:centr}
		\begin{tabular}{||c c c||} 
			\hline
			$\sum_{i\in\mathcal{V}}d_i$ &  $\thickbar{\bm{x}}=[\thickbar{x}_i]$& $\Upsilon(\thickbar{\bm{x}})-\bm{1}^T\thickbar{\bm{x}}$\\ [0.5ex] 
			\hline\hline
			36&[5,~6.0836,~8.8734,~7.31,~8.2366,~6.57]&-36\\
			\hline
			48&[5,~7.406,~14.844,~11.544,~10,~8]&-48\\
			\hline
			55.2&[5,~9.4079,~19.5281,~15,~10,~8]&-55.2\\\hline
		\end{tabular}
	\end{center}
\end{table}

For the purpose of solving \eqref{edp2relmod} in a distributed setting, the algorithm \eqref{alg_edp} is employed with $\alpha_i(k)={100}/{k^{0.6}}$ and the load demand is 2MW at all load buses, meaning that $\sum_{i\in\mathcal{V}}d_i=48$ since there are 24 load buses. Figure \ref{fig:edpres1} shows that the total power
mismatch converges to zero as iteration increases.  Moreover, Figure \ref{fig:edpres2} describes the dispatched power at each generator. Note that the optimal solution from the distributed algorithm converges to the solution by a centralized method shown in Table \ref{tab:centr}. Moreover, Table \ref{tab:distr} shows the dispatched power obtained using the proposed distributed method \eqref{alg_edp} for given different load demands. Observe that the results are the same as the solution obtained using a centralized approach described in Table \ref{tab:centr}. 
\begin{figure}
	\centering
	\includegraphics[height=3.5cm]{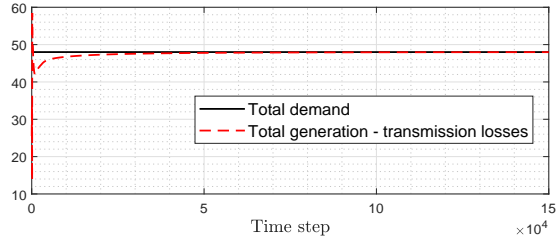}
	\caption{The dispatched power converges to the value such that $\sum_{i\in\mathcal{V}}d_i=\sum_{i\in\mathcal{V}}x_i-\Upsilon(\bm{x})$, $\bm{x}=[x_i]$.}
	\label{fig:edpres1}
\end{figure}%
\begin{figure}
	\centering
	\includegraphics[height=6cm]{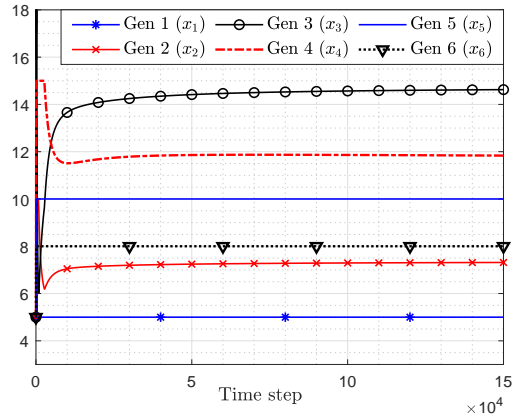}
	\caption{For the given load demand $\sum_{i\in\mathcal{V}}d_i=48$, the dispatched power $x_i(k)$ converges to $[5,7.38,14.78,11.64,10,8]$. }
	\label{fig:edpres2}
\end{figure}%
\begin{table}
	\begin{center}
		\caption{Optimal solution of \eqref{edp2relmod} obtained using the proposed distributed algorithm \eqref{alg_edp}. }\label{tab:distr}
		\begin{tabular}{||c c c||} 
			\hline
			$\sum_{i\in\mathcal{V}}d_i$ &  $\thickbar{\bm{x}}=[\thickbar{x}_i]$& $\Upsilon(\thickbar{\bm{x}})-\bm{1}^T\thickbar{\bm{x}}$\\ [0.5ex] 
			\hline\hline
			36&[5,~6.05,~8.82,~7.34,~8.2323,~6.6437]&-35.9999\\
			\hline
			48&[5,~7.38,~14.78,~11.64,~10,~8]&-48.0035\\
			\hline
			55.2&[5,~9.407,~19.5321,~15,~10,~8]&-55.2027\\\hline
		\end{tabular}
	\end{center}
\end{table}
\subsection{The proposed priority considered load shedding}
Numerical simulations are carried out to investigate the effectiveness of distributed algorithm \eqref{alg_ls} to solve \textbf{LS Problem} defined in Section II. The algorithm is applied with $\alpha(k)=1000/(k+500)$ to allocate the load shedding for 24 buses, i.e., $i\in\{7,\ldots,30\}$. 
In this study, $m=3$ is  considered where $\mathcal{M}_1=\{7\}$, $\mathcal{M}_2=\{8,9\}$, and $\mathcal{M}_3=\{10\}$. This means that we would like to shed the load in the $7$th, $8$th, $9$th, and $10$th bus first before shedding the remaining 20 buses. Moreover, the bus belonging to the set $\mathcal{M}_1$ has the highest, and $\mathcal{M}_3$ has the lowest priority to be shed.
Consider $\mathsf{r}_i=3$ and $D_i(y_i)=\frac{1}{2}\mathsf{q}_iy_i^2$ where $\mathsf{q}_i=1$ for all $i=\{11,12,21,22\}$, $\mathsf{q}_i=2$ for all $i=\{13,14,23,24\}$, $\mathsf{q}_i=2.5$ for all $i=\{15,16,25,26\}$, $\mathsf{q}_i=3$ for all $i=\{17,18,27,28\}$, and $\mathsf{q}_i=4$ for all $i=\{19,20,29,30\}$.

The algorithm is employed by setting $\kappa=40$ and $y_i^{max}=1.2$, $i\in\mathcal{V}$. At first, let us consider the case when $\sum_{i\in\mathcal{V}}s_i=y_{tot}=1.8$. Figure \ref{fig:ls1} shows that $y_7=y_7^{max}=1.2$ and $y_8=y_9=0.3$. Since $y_i\approx 0$ for all $i\in\mathcal{V}\backslash\{7,8,9\}$, we have $\sum_{i\in\mathcal{V}}y_i=y_{tot}$. Note that $y_8=y_9=0.3$ because the load shedding at the bus belonging to the set $\mathcal{M}_1$ is not enough to meet the required load shedding $y_{tot}$. Owing to this, the priority-considered load shedding scenario is achieved. 

Next, simulations are carried out by considering three different values of $y_{tot}$. The results are shown in Figure \ref{fig:ls2}. When, $y_{tot}=1$, only the bus belonging to the set $\mathcal{M}_1$ is shed. In the case $y_{tot}=4$, we have $y_7=y_{8}=y_{9}=1.2$ and $y_{10}=0.4$. Moreover, when $y_{tot}=6$, we have $y_7=y_{8}=y_{9}=y_{10}=\sum_{i\in V\backslash\{7,8,9,10\}}y_i=1.2$. Owing to this, the proposed method is able to yield the load shedding according to the desired priority list. Figure \ref{fig:ls3} shows the convergence of the algorithm in the case when $y_{tot}=6$. Note that $y_i$ converges to the value such that its summation over all $i\in\mathcal{V}$ coincides with the desired value $\sum_{i\in\mathcal{V}}s_i=y_{tot}=6$.
\begin{figure}
	\centering
	\includegraphics[height=3.5cm]{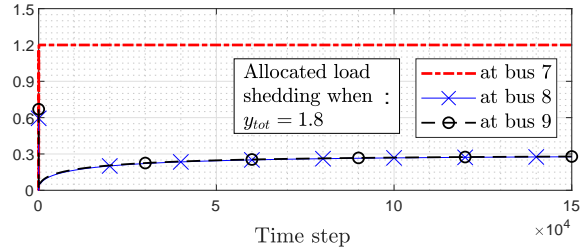}
	\caption{Let  $\mathcal{M}_1=\{7\},\mathcal{M}_2=\{8,9\}$, $y_7^{max}=1.2$, and $y_{tot}=1.8$, simulation results show that the desired priority-considered load shedding is obtained. }
	\label{fig:ls1}
\end{figure}%
\begin{figure}
	\centering
	\includegraphics[height=6cm]{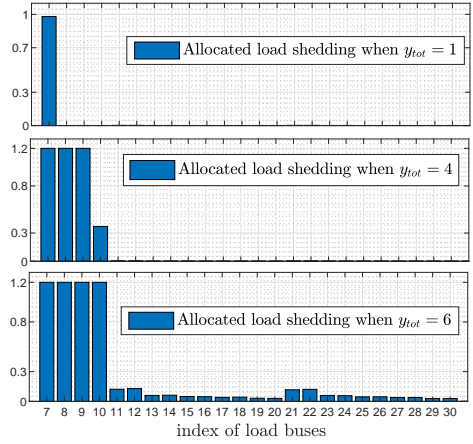}
	\caption{Simulation results for different $y_{tot}:1,4,$ and $6$. Given that $\mathcal{M}_1=\{7\},\mathcal{M}_2=\{8,9\}$, $\mathcal{M}_3=\{10\}$, and $y_i^{max}=1.2$ for all $i\in\mathcal{V},$ the scheduled load shedding follows the desired priority list.}
	\label{fig:ls2}
\end{figure}%
\begin{figure}
	\centering
	\includegraphics[height=3cm]{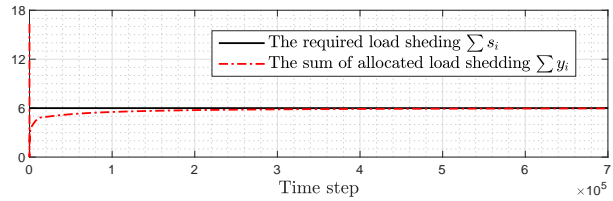}
	\caption{For the case when $\sum_{i\in\mathcal{V}}s_i=y_{tot}=6$ the value of $\sum_{i\in\mathcal{V}}y_i-\sum_{i\in\mathcal{V}}s_i$ converges to zero as the iterations increase.}
	\label{fig:ls3}
\end{figure}%
%%%%%%%%%%%%%%%%%%%%%%%%%%%%%%%%%%%%%%%%%%%%%%%%%%%%%%%%%%%%%%%%%%%%%%%%%%%%%%%%%%%%%%%%%%%%%%%

%%%%%%%%%%%%%%%%%%%%%%%%%%%%%%%%%%%%%%%%%%%%%%%%%%%%%%%%%%%%%%%%%%%%%%%%%%%%%%%%%%%%%%%%%%%%%%%
\section{Conclusion}
This paper formulates an optimization problem for the EDP to deal with the transmission loss power modeled in a quadratic function. In addition, an optimization problem is proposed to consider some priorities in the load shedding program. Under some assumptions, distributed algorithms for both problems are proposed. Moreover, a feasibility problem is adopted to find the necessary load shedding for handling an overloading condition in a distributed manner. 

\bibliographystyle{IEEEtran}
\bibliography{reference_fix}
\appendix
\subsection{Proof of Theorem \ref{thm:relax}}\label{append:thmrelax}
Note that $\mathcal{L}(\bm{x},\bm{u},\lambda,\bm{\xi})$ given in \eqref{globallagr} is the relaxed Lagrangian function of \eqref{edp2rel} since the constraint $x_i\in\mathcal{X}_i$ is considered implicitly. For the purpose of showing \eqref{kkt}, let us consider the full Lagrangian function of \eqref{edp2rel} as follows:
\begin{eqnarray*}
	\hat{\mathcal{L}}(\bm{x},\bm{u},\lambda,\bm{\xi},\bm{\varphi},\bm{\sigma})\!\!\!&:=&\!\!\!	\sum_{i\in\mathcal{V}}\left(\sigma_i(x_i\!-\!x_i^{max})+\varphi_i(x_i^{min}-x_i)\right)+\mathcal{L}(\bm{x},\bm{u},\lambda,\bm{\xi})
\end{eqnarray*}
where $\bm{\varphi}=[\varphi_i],\bm{\sigma}=[\sigma_i],$$i\in\mathcal{V}$. The following is complementary slackness condition corresponding to inequality constraint \eqref{eqbalrel}
\begin{eqnarray}\label{slackness}
	\thickbar{\lambda}\geq 0,\quad\thickbar{\lambda} \left(\sum\limits_{i\in \mathcal{V}}(\thickbar{u}_i^2+d_i -\thickbar{x}_i)\right)=0.
\end{eqnarray}
Let us consider the case when $\thickbar{\lambda}=0$ and check the existence of  $\thickbar{x}_i$, $\thickbar{u}_i$, $\thickbar{\sigma}_i$, and $\thickbar{\varphi}_i, i\in\mathcal{V},$ satisfying the KKT conditions. At first, $\partial\hat{\mathcal{L}}/\partial x_i=0$ and $ \partial\hat{\mathcal{L}}/\partial u_i=0$ can be written as follows:
\begin{eqnarray}
	\!\!\!\partial\hat{\mathcal{L}}/\partial x_i&=&\nabla C_i(\thickbar{x}_i)\!-\!\thickbar{\lambda}+\!\sum_{j\in\mathcal{V}}\thickbar{\xi}^{(j)}r_{ji}\!+\!\thickbar{\sigma}_i\!-\!\thickbar{\varphi}_i\!=\!0,\label{derif1}\\
	\!\!\!\partial\hat{\mathcal{L}}/\partial u_i&=&2\thickbar{\lambda}u_i-\thickbar{\xi}^{(i)}=0.\label{derif2}
\end{eqnarray}
Since $\thickbar{\lambda}=0$, from \eqref{derif2} we have $\thickbar{\xi}^{(j)}=0$ for all $j\in\mathcal{V}$. Substituting it to \eqref{derif1} yields $\nabla C_i(\thickbar{x}_i)+\thickbar{\sigma}_i-\thickbar{\varphi}_i=0$.

Having this, let us consider the case where $\thickbar{\varphi}_i=0,\thickbar{\sigma}_i=0$, $i\in\mathcal{V}$. This means $\thickbar{x}_i\in \mbox{relint}(\mathcal{X}_i)$ and yields $\nabla C_i(\thickbar{x_i})=0$. Note that this contradicts  the assumption that $\nabla C_i(x_i)>0$ for all $x_i\in \mbox{relint}(\mathcal{X}_i)$ described in Assumption \ref{asm:conv}. Next, consider the case where $\thickbar{\varphi}_i=0$ and $\thickbar{\sigma}_i\not=0,$ meaning that $\thickbar{x}_i=x_i^{max}$ for all $i\in\mathcal{V}$. It follows that $C_i(\thickbar{x}_i)+\thickbar{\sigma}_i=0$. 
Since $\thickbar{\sigma}_i>0$, it yields $\nabla C_i(\thickbar{x}_i^{max})<0$ which can not be true according to Assumption \ref{asm:conv}. Furthermore, let us consider the case where  $\thickbar{\varphi}_i\not=0$ and $\thickbar{\sigma}_i=0,$ meaning that $\thickbar{x}_i=x_i^{min}$, $i\in\mathcal{V}$. In this case, we have $\nabla C_i(\thickbar{x}_i^{min})-\thickbar{\varphi}_i=0$, which can be true. In conclusion, we have $\thickbar{x}_i=x_i^{min}$ when $\lambda=0$. By using \eqref{kkt3} and $B=R^TR$, we have $\sum_{i\in \mathcal{V}}\thickbar{u}_i^2=\Upsilon(\bm{x}_i^{min})$.The following holds according to Assumption \ref{asm:feas}
\begin{eqnarray*}
	\sum\limits_{i\in \mathcal{V}}\thickbar{u}_i^2+d_i-\thickbar{x}_i=\Upsilon(\bm{x}_i^{min})+\sum\limits_{i\in \mathcal{V}}\left(d_i-x_i^{min}\right)>0.
\end{eqnarray*}
Note that this contradicts the assumption that $\thickbar{\lambda}=0$ which equivalent to $\sum_{i\in \mathcal{V}}\thickbar{u}_i^2+d_i -\thickbar{x}_i<0$. 
Owing to this, $\thickbar{\lambda}>0$, meaning that equality constraint \eqref{kkt2} holds.  
\subsection{Proof of Claim 1}\label{append:claim}
Because $x_i(k+1)$ and $u_i(k+1)$ satisfy \eqref{alg_edp_prim}, i.e., $[x_i(k\!+\!1),u_i(k\!+\!1)]\!=\!\inf_{{x_i\in\mathcal{X}_i,u_i\in\mathcal{U}}}\!\mathcal{L}_i(x_i,u_i,v_i(k),\bm{w}_i(k))$ , for any $k\geq 0,i\in \mathcal{V}$, we have
\begin{eqnarray}\label{zero}
	0&\leq&-\mathcal{L}_i(x_i(k\!+\!1),u_i(k\!+\!1),v_i(k),\bm{w}_i(k))\nonumber\\&&+\mathcal{L}_i(\thickbar{x}_i,\thickbar{u}_i,v_i(k),\bm{w}_i(k)).
\end{eqnarray}
Due to the same reason, $q_i(v_i(k),\bm{w}_i(k))=-h_i(v_i(k),\bm{w}_i(k))=-\inf_{x_i\in\mathcal{X}_i,u_i\in\mathcal{U}}\mathcal{L}_i(x_i,u_i,v_i(k),\bm{w}_i(k))$ can be rewritten as
\begin{eqnarray}\label{eq1}
	&&\!\!\!\!\!\!\!\!\!\!\!\!\!\!\!\!\!\!\!\!\!q_i(v_i(k),\bm{w}_i(k))\!\!=\!-\mathcal{L}_i(x_i(k\!+\!1),\!{u}_i(k\!+\!1),v_i(k),\bm{w}_i(k)).
\end{eqnarray}
The zero-duality gap  implies that  $\sum_{i\in\mathcal{V}}C_i(\thickbar{x}_i)=\sum_{i\in\mathcal{V}}h_i(\thickbar{\lambda},\thickbar{\bm{\xi}})=-\sum_{i\in\mathcal{V}}q_i(\thickbar{\lambda},\thickbar{\bm{\xi}})$. As a result, we have
\begin{eqnarray}\label{eq2}
	&&\!\!\!\!\!\!\!\!\!\!\sum_{i\in \mathcal{V}}\mathcal{L}_i(\thickbar{x}_i,\thickbar{u}_i,v_i(k),\bm{w}_i(k))\!=\!\nonumber\\&& -\sum_{i\in \mathcal{V}}\!q_i(\thickbar{\lambda},\thickbar{\bm{\xi}})\!+\!\sum_{i\in \mathcal{V}}v_i(k)(\thickbar{u}_i^2\!+\!d_i\!-\!\thickbar{x}_i)\!\nonumber\\&&+\!\sum_{i\in \mathcal{V}}w_i^{(i)}(k)(r_{ii}\thickbar{x}_i\!-\!\thickbar{u}_i)+\!\sum_{i\in \mathcal{V}}\sum_{j\in\mathcal{V},j\not=i}\!\!\!\!\!\!w_i^{(j)}\!(k)r_{ji}\thickbar{x}_i,~~\label{two}
\end{eqnarray}
where $\bm{w}_i=[w_i^{(j)}]$, $j\in\mathcal{V}.$
The following is obtained from the summation of \eqref{zero} over all $i\in\mathcal{V}$.
\begin{eqnarray}\label{finala}
	\!\!\!\!\!\!\!\!\!\!\!&\!\!\!\!\!&\!\!\!\!\!\!\!\!\!\!\!\!\!\!\!\!0\leq-\sum_{i\in \mathcal{V}}\mathcal{L}_i(x_i(k\!+\!1),u_i(k\!+\!1),v_i(k),\bm{w}_i(k))\\&&+\sum_{i\in \mathcal{V}}\mathcal{L}_i(\thickbar{x}_i,\thickbar{u}_i,v_i(k),\bm{w}_i(k)).\nonumber
\end{eqnarray}
Substituting \eqref{eq1}-\eqref{eq2} into \eqref{finala} yields
\begin{eqnarray*}
	0&\leq&\sum_{i\in \mathcal{V}}\!q_i(v_i(k),\bm{w}_i(k))-\sum_{i\in \mathcal{V}}q_i(\thickbar{\lambda},\thickbar{\bm{\xi}})\\&&\!+\!\sum_{i\in \mathcal{V}}v_i(k)(\thickbar{u}_i^2\!+\!d_i\!-\!\thickbar{x}_i)
	\!+\!\sum_{i\in \mathcal{V}}w_i^{(i)}(k)(r_{ii}\thickbar{x}_i\!-\!\thickbar{u}_i)\\&&+\!\sum_{i\in \mathcal{V}}\sum_{j\in\mathcal{V},j\not=i}\!\!\!\!\!\!w_i^{(j)}\!(k)r_{ji}\thickbar{x}_i.\nonumber
\end{eqnarray*}
Since $q_i$ is a convex function, it follows that
\begin{eqnarray}\label{final}
	0&\leq&\sum_{i\in \mathcal{V}}\partial q_i(\thickbar{\lambda},\thickbar{\bm{\xi}})^T\left(\mato{c}{\!\!\thickbar{\lambda}\!\!\\\!\!\thickbar{\bm{\xi}}\!\!}\!-\!\mato{c}{	\!\!\!\!{v}_i(k)\!\!\!\!\\ \!\!\!\!\bm{w}_i(k)\!\!\!\!}\right)\!\nonumber\\&&+\!\sum_{i\in \mathcal{V}}v_i(k)(\thickbar{u}_i^2\!+\!d_i\!-\!\thickbar{x}_i)\!
	+\!\sum_{i\i \mathcal{V}}w_i^{(i)}(k)(r_{ii}\thickbar{x}_i\!-\!\thickbar{u}_i)\nonumber\\&&+\!\sum_{i\i \mathcal{V}}\sum_{j\in\mathcal{V},j\not=i}\!\!\!\!\!\!w_i^{(j)}\!(k)r_{ji}\thickbar{x}_i\nonumber
	\\&\leq& \sum_{i\in \mathcal{V}}\Gamma\left\|\mato{c}{\thickbar{\lambda}\\\thickbar{\bm{\xi}}}-\mato{c}{{v}_i(k)\\ \bm{w}_i(k)}\right\|\!\nonumber\\&&+\!\sum_{i\in \mathcal{V}}v_i(k)(\thickbar{u}_i^2\!+\!d_i\!-\!\thickbar{x}_i)\!+\!\sum_{i\in \mathcal{V}}w_i^{(i)}(k)(r_{ii}\thickbar{x}_i\!-\!\thickbar{u}_i)\nonumber\\&&+\!\sum_{i\in \mathcal{V}}\sum_{j\in\mathcal{V},j\not=i}\!\!\!\!\!\!w_i^{(j)}\!(k)r_{ji}\thickbar{x}_i.
\end{eqnarray}
Note that the right hand side of \eqref{final} goes to zero as $k\rightarrow \infty$ because $(v_i(k),\bm{w}_i(k))\!\!\rightarrow\!\!(\thickbar{\lambda},\thickbar{\bm{\xi}})$ as $k\rightarrow \infty$, $\sum_{i\in \mathcal{V}}(\thickbar{u}_i^2+d_i-\thickbar{x}_i)\leq0$, and $\sum_{i\in\mathcal{V}}(\sum_{j\in \mathcal{V}}r_{ji}\thickbar{x}_i-\thickbar{u}_i)=0$. This implies
$\lim_{k\rightarrow\infty}\!\sum_{i\in \mathcal{V}}\mathcal{L}_i(\thickbar{x}_i,\thickbar{u}_i,v_i(k),\bm{w}_i(k))\!-\!\sum_{i\in \mathcal{V}}\mathcal{L}_i(x_i(k\!+\!1),u_i(k\!+\!1),v_i(k),\bm{w}_i(k))\!=\!0$ which is equivalent to \eqref{sub1}.
\subsection{Proof of Theorem \ref{thm:ls}}\label{append:thmls}
Given that $y_i(k)\in\mathcal{Y}_i$ and $z_i(k)\in\mathcal{Z}_i$, there exists $\Lambda>0$ such $\|\bm{\mathsf{g}}_i(z_i(k),y_i(k))\|\leq \Lambda$ for all $k\geq 0$. Thus, $\bm{\eta}_i(k)\rightarrow\thickbar{\bm{\eta}}$ as $k\rightarrow\infty$ according to \cite[Proposition 5]{Nedic2009}. Using the similar reasoning for obtaining \eqref{final}, we have
\begin{eqnarray}\label{final2}
	0\!\!\!\!\!&\!\!\!\!\leq\!\!\!\!&\!\!\!\!\sum_{i\in \mathcal{V}}\mathcal{L}_i^{ls}(\thickbar{y}_i,\thickbar{z}_i,\bm{\phi}_i(k))\!-\!\sum_{i\in \mathcal{V}}\mathcal{L}_i^{ls}(y_i(k\!+\!1),z_i(k\!+\!1),\bm{\phi}_i(k))\nonumber\\&\leq&\sum_{i\in \mathcal{V}}\left(\Lambda\left\|\thickbar{\bm{\eta}}-\bm{\phi}_i(k)\right\|+\sum_{\nu=1}^{m+1}\phi_i^{(\nu)}(k)\mathsf{g}_i^{(\nu)}(\thickbar{z}_i,\thickbar{y}_i)\right).
\end{eqnarray}
Since $F_i$ is a strongly convex function, it follows that $\mathcal{L}_i^{ls}(y_i,z_i,\bm{\phi}_i(k))$ is a strongly convex function for any $\bm{\phi}_i(k)$. In other words, there exists $\gamma_i>0$ such that
\begin{eqnarray}\label{sub3}
	\!\!\!\!\!\!\!\!\!&\!\!\!\!\!\!\!\!\!&\!\!\!\!\!\!\!\!\!\!\!\!\frac{\gamma_i}{2}\left\|\mato{c}{\thickbar{y}_i\\\thickbar{z}_i}-\mato{c}{{y}_i(k+1)\\ z_i(k+1)}\right\|^2\leq\mathcal{L}^{ls}_i(\thickbar{y}_i,\thickbar{z}_i,\thickbar{\lambda},\bm{\phi}_i(k))\!\\
	&&\!\!\!\!\!\!-\mathcal{L}_i^{ls}(y_i(k\!+\!1),z_i(k\!+\!1),\bm{\phi}_i(k))\nonumber\\&&\!\!\!\!\!\!-\nabla\mathcal{L}_i^{ls}(y_i(k\!+\!1),z_i(k\!+\!1),\bm{\phi}_i(k))^T\!\!\left(\mato{c}{\!\!\!\thickbar{y}_i\!\!\!\\\!\!\!\thickbar{z}_i\!\!\!}\!-\!\mato{c}{\!\!\!\!y_i(k\!+\!1)\!\!\!\!\\ \!\!\!\!z_i(k\!+\!1)\!\!\!\!}\right)\!.\nonumber
\end{eqnarray}
By definition, $\nabla\mathcal{L}_i^{ls}(y_i(k\!+\!1),z_i(k\!+\!1),\bm{\phi}_i(k))^T=0$. Summing \eqref{sub3} over $i\in\mathcal{V}$ and using \eqref{final2} yields
\begin{eqnarray*}
	\sum_{i\in \mathcal{V}}\frac{\gamma_i}{2}\left\|\mato{c}{\thickbar{y}_i\\\thickbar{z}_i}-\mato{c}{{y}_i(k+1)\\ z_i(k+1)}\right\|^2\!\!\!&\!\!\!\leq\!\!\!&\!\!\! \sum_{i\in \mathcal{V}}\Lambda\left\|\thickbar{\bm{\eta}}-\bm{\phi}_i(k)\right\|\\\!\!\!\!\!\!\!\!\!\!\!\!\!\!\!\!\!\!\!\!\!&\!\!\!\!\!\!\!\!\!\!\!\!\!\!\!&\!\!\!\!\!\!\!\!\!\!\!\!\!\!\!\!\!\!\!\!\!+\sum_{i\in \mathcal{V}}\sum_{\nu=1}^{m+1}\phi_i^{(\nu)}(k)\mathsf{g}_i^{(\nu)}(\thickbar{z}_i,\thickbar{y}_i).
\end{eqnarray*}
Given that the right-hand side of this inequality goes to zero as $k\rightarrow \infty$ according to \eqref{final2}, we have $(z_i(k),y_i(k))\rightarrow(\thickbar{z}_i,\thickbar{y}_i)$ as $k\rightarrow \infty$.
\end{document}